\newcommand{\ins}{\mathcal N}       
\newcommand{\iW}{\mathcal W}      
\newcommand{\iC}{\mathcal C}  
\begin{document}

\title{Effects of Vote Delegation in Blockchains: Who Wins?}

\author{}
\institute{}
\author{Hans Gersbach \inst{1} \and
Manvir Schneider\inst{2}\and
Parnian Shahkar\inst{3}}
\authorrunning{F. Author et al.}
%
\institute{KOF Swiss Economic Institute, ETH Zurich, Switzerland \email{hgersbach@ethz.ch}
 \and
Cardano Foundation, Switzerland \email{manvir.schneider@cardanofoundation.org} \and 
University of California, Irvine, USA
\email{shahkarp@uci.edu} }

\maketitle

\begin{abstract}
Should blockchain allow for vote delegation? This paper investigates which alternative benefits from vote delegation in binary collective decisions within blockchains. We begin by examining two extreme cases of voting weight distributions: \textit{Equal-Weight} (\textit{EW}), where each voter has equal voting weight, and \textit{Dominant-Weight} (\textit{DW}), where a single voter holds a majority of the voting weights before any delegation occurs. We show that vote delegation tends to benefit the ex-ante minority under \textit{EW}, i.e., the alternative with a lower initial probability of winning. The converse holds under \textit{DW} distribution. Through numerical simulations, we extend our findings to arbitrary voting weight distributions, showing that vote delegation benefits the ex-ante majority when it leads to a more balanced distribution of voting weights. Finally, in large communities where all agents have equal voting weight, vote delegation has a negligible impact on the outcome. As a practical consequence, vote delegation can be beneficial for blockchains with highly unbalanced voting rights, but not for those with balanced rights. In decentralized finance (DeFi), vote delegation is widely adopted to streamline governance and increase participation. However, it remains unclear when delegation actually aligns outcomes with community preferences\footnote{This research was partially supported by the Zurich Information Security and Privacy Center (ZISC) at ETH Zurich. We thank Roger Wattenhofer, Huseyin Yildirim, Marcus Pivato, Florian Brandl and participants of the Astute Modeling Seminar at ETH Zurich for their valuable feedback. }.
\end{abstract}

\section{Introduction}

Staking is a fundamental tool of Proof-of-Stake (PoS) blockchains for enhancing their chain and guaranteeing economic security. Examples include Cardano\footnote{\url{https://www.cardano.org} (retrieved January 20, 2022). See also~\cite{Karakostas2020}.}, Solana~\cite{Yakovenko2017}, Polkadot~\cite{Wood2016}, Tezos~\cite{Goodman2014}, and Concordium\footnote{\url{https://www.concordium.com} (retrieved June 7, 2022).}. These blockchains allow agents to delegate their stakes to other agents for validation purposes or to govern the blockchain. Typically, the agents who delegate, referred to as {\em delegators}, do not know the preferences of the agents to whom they delegate, as every participant is represented merely by an address in the form of a number or a pseudonym. This model captures real-world delegation systems in DeFi protocols and DAOs, such as Arbitrum, Compound, or Optimism, where voters often interact pseudonymously and delegate voting power without complete information. Our analysis is directly applicable to on-chain governance in DeFi, where vote delegation is common. Protocols such as Uniswap and Optimism rely on delegated voting to manage proposals and upgrades. By identifying how delegation affects outcomes under different token distributions, our findings offer design insights for such systems, especially when participation is low or token holdings are concentrated.\\

A weighted voting system is a method of decision-making in which agents are allocated a number of votes or a voting weight that varies according to specific criteria. Unlike traditional voting systems where each voter has an equal vote, the influence or power of each voter in a weighted voting system is proportional to their assigned voting weight. In blockchains, particularly in the new generation of Proof-of-Stake blockchains, voting weights can vary among agents, often tied to their stakes in the system. Throughout this paper, whenever we use the term "voting", we are specifically referring to weighted voting. Also, we use "weights" and "voting weights" interchangeably.\\

The agents involved in the governing body of a blockchain or an electronic voting system are divided into two distinct groups: voters and delegators. Delegators are those agents who prefer not to participate directly in the voting process. In a conventional voting system, delegation is prohibited, and thus, these individuals would abstain. However, if vote delegation is permitted, each delegator delegates their votes to a voter. Since delegators do not know the preferences of the voters, all voters appear alike to them. Consequently, each voter has an equal chance of receiving votes from a delegator.
There are various reasons why individuals might choose to delegate their votes. These reasons include avoiding the costs associated with becoming informed about the issues and alternatives, or gaining additional financial returns, as is often the case with staking on blockchains. 
Conversely, voters are the agents who always participate in the voting process. In a conventional voting system, the voting weight of these agents is determined solely by their own stakes. However, when delegation is allowed, their voting weight is augmented by any additional weight they receive from delegators. This dynamic can significantly influence the overall voting outcome, as voters who received more votes gain more influence in the decision-making process. \\

We address the following problem: In a blockchain or an electronic voting system, a governing body must choose between two alternatives, denoted as {\em A} and {\em B}. These alternatives may represent voting on a node software upgrade, with {\em A} for the upgrade and {\em B} for staying with the current software. A node upgrade aims to enhance the blockchain's capabilities. Thus, the majority supports the upgrade, while the minority, due to infrastructure requirements, opposes it. 
Alternatively, the decision could involve invalidating ({\em A}) or validating ({\em B}) an incorrect transaction. The underlying assumption in PoS blockchains is that a majority will invalidate incorrect transactions, while a minority of possibly malicious agents may attempt to validate them to disrupt the system. Transaction validation requires agents to run a node and blockchain software. Agents unable or unwilling to run a node can delegate their voting weights to other agents involved in the validation process. In blockchains, the preferred alternative, which ensures the integrity and security of the network, is expected to be supported by the majority. In this example, the preferred choice is invalidating incorrect transactions. Our goal is to explore instances where delegation enhances the probability of the preferred alternative winning, thereby enabling delegation under such conditions. We employ a random model where each voter independently votes for the preferred alternative, denoted as alternative {\em A}, with probability $p$ ($0.5 < p < 1$). We categorize individuals favoring {\em A} ({\em B}) as {\em A-voters} ({\em B-voters}). The probability that alternative {\em A} wins always exceeds $\frac{1}{2}$, designating {\em A} as the {\em ex-ante majority}, and {\em B} as the {\em ex-ante minority}.\\

We analyze two extreme distributions of voter weights: an \textit{Equal-Weight} distribution ({\em EW}), where voting weights are evenly distributed, and a \textit{Dominant-Weight} distribution ({\em DW}), where one voter holds a majority of the weights. Throughout the paper, we will use these abbreviations, {\em EW} and {\em DW}, to refer to these distributions. Unless otherwise specified, no assumptions are made regarding the voting weight distribution of delegators. 

Our study reveals four key findings:
First, if the weight distribution of voters before delegation is {\em EW}, delegation favors alternative {\em B}, whereas if the distribution is {\em DW}, it benefits \emph{A}.
These outcomes, derived from comprehensive proofs, demonstrate the critical influence of voters' weight distribution in determining the favored side under delegation. 
Second, when all delegators have equal weight, the probability of {\em A} winning converges to that under the {\em EW} distribution as the number of delegators increases. This result holds regardless of the initial distribution of voter weights.
Third, if the weight distribution of voters is {\em EW}, the probability that {\em A} wins approaches one as the number of voters increases. This result holds true even if any arbitrary number of equal-weight delegators delegate their votes, rendering the effect of delegation negligible.
Fourth, numerical analysis illustrates how balance in the distribution of voters' weights influences the likelihood of {\em A}'s victory. A more balanced weight distribution of voters increases the probability of {\em A} winning. These insights underscore that the initial distribution of voters' weight is crucial in determining which alternative potentially benefits from delegation. \\

Our findings have several practical implications. In small, balanced communities, even a single instance of vote delegation can significantly reduce the probability of the preferred alternative winning, suggesting that delegation should be avoided. In large, equal-weight communities, delegation has little impact, so other factors like financial incentives should guide the decision to allow it. Conversely, in blockchains with highly unbalanced voting weights, delegation helps balance the weights and improves the preferred alternative's chances. Additionally, in scenarios with many equal-weight delegators, delegation can shift the weight distribution towards equal-weight, making it beneficial for the blockchain community. Our findings offer design insights for DAO governance structures, especially in systems where token distributions are skewed, and delegation may either mitigate or increase concentration of voting power.

Our results also speak to the consequences of Sybil attacks, i.e. the possible incentives for agents to create multiple identities to which other agents can delegate. We will show that in case of having a Sybil attacker, we get to the \textit{DW} regime, and thereby the likelihood of the minority winning is the highest. Therefore, in these cases, vote delegation should be strictly prohibited.  \\

This paper is organized as follows. In Section \ref{sec: literature}, we discuss the related literature. Section \ref{sec: model} introduces our model. In Sections \ref{sec: results1} and \ref{sec: results2}, we study the impact of delegation where the weight distribution of voters is {\em DW} and {\em EW} respectively. Section \ref{sec:asymptotic} outlines the asymptotic behavior of delegation. In Section \ref{sec:numerical} we present our numerical experiments. In Section \ref{sybill-attack}, we discuss ways to prevent Sybil attacks that are already used in practice and we discuss how results might be affected if such an attack happened nevertheless. Finally, Section \ref{sec: conclusion} concludes the paper.

\section{Related Literature} \label{sec: literature}

Vote delegation has attracted significant attention in democratic literature, particularly within computational social choice theory and more recently in the blockchains. In computational social choice, this practice is often termed liquid democracy.\\

Gersbach et al.~\cite{GMM21} find that vote delegation often results in more favorable outcomes with higher probability compared to the conventional voting in costly voting environments, when malicious voters are present, and when preferences are private. Gersbach et al.~\cite{Gersbach2022} review how vote delegation, under both free and capped scenarios, may impact voting outcomes with private preference information. Conversely, Kahng et al.~\cite{liquid_algorithmic} examine a scenario where voters have different levels of information. In their model with information acquisition, they show that even delegations from less-informed to better-informed voters, may reduce the likelihood of choosing the preferred alternative. In our paper, we investigate how the distribution of voter weights affects the probability of each alternative winning the election, when preferences are private. This assumption on private information is in line with the studies by \cite{GMM21} and \cite{Gersbach2022}. Our model builds on the model of \cite{Gersbach2022} and allows us to study how different voters' weight distributions impact delegation. Caragiannis~and~Micha~\cite{Caragiannis2019} extend the work of \cite{liquid_algorithmic}, demonstrating that liquid democracy can result in less desirable outcomes compared to conventional voting.\\

Further literature on liquid democracy includes works by Christoff and Grossi~\cite{christoff2017binary}, and Brill and Talmon~\cite{brill2018pairwise}. The latter paper assumes that voters can delegate votes to better-informed voters that reflect their partial order over preferences. \cite{christoff2017binary} analyzes delegation cycles and how delegations affect individual rationality in liquid democracy. Our model does not allow for delegation cycles as the set of delegators is separated from the set of voters.\\

Weidener et al.~\cite{Weidener2025} conduct a comprehensive review of vote delegation in DAOs, with a focus on transparency, centralization, and voter passivity. Their work highlights the need for better theoretical tools to understand when delegation improves outcomes—precisely what our model addresses. Bongaerts et al.~\cite{bongaerts2025votedelegationdefigovernance} provide empirical evidence from multiple DeFi protocols showing how vote delegation shapes governance outcomes and concentrates influence among a few key actors. Strnad~\cite{strnad2025delegationparticipationdecentralizedgovernance} applies an epistemic view to decentralized governance, demonstrating that while partial abstention can enhance decision accuracy, explicit vote delegation often undermines it.\\

Voting power (Banzhaf voting index) within the context of liquid democracy is examined by Zhang~and~Grossi~\cite{Zhang_Grossi_2021}. In Proof-of-Stake blockchains, agents holding a stake of the native cryptocurrency, can delegate their stake to other agents, for example, to agents that run a stake pool. Delegation to stake pools when preferences of stake pool owners (and delegators) is private information is analyzed in Gersbach et al.~\cite{gersbach2022staking} and Schneider~\cite{schneider2023staking}. By delegating to a stake pool, agents expect some staking reward which can be translated to gaining utility if the preferred alternative is implemented in a liquid democracy setting. The problem of selecting an appropriate stake pool (or delegatee) when agents have different levels of information about (or trust in) other agents is studied by Zhang et al.~\cite{zhang2023rationally}.\\

Bloembergen et al.~\cite{rational_delegation} and Escoffier et al.~\cite{escoffier-gilbert} study the strategic behavior of delegation on networks. The former identifies conditions for the existence of Nash equilibria in such games, while the latter demonstrates that in more general setups, no Nash equilibrium may exist, and it may even be $NP$-complete to decide whether one exists at all. \cite{liquid_algorithmic} analyze the potential benefits of delegating votes to neighbors in a network structure. \\

Amanatidis et al.~\cite{amanatidis2023potential} study vote delegation for the on-chain governance of the Cardano blockchain, also known as CIP-1694.\footnote{See \url{https://www.1694.io/en} (retrieved June 25, 2024).} In particular, they examine elections where voters must express preferences on numerous issues, often resulting in incomplete preferences due to the vast number of alternatives. This study, motivated by blockchain governance systems (in particular, Cardano), investigates whether delegating votes to knowledgeable proxies can enhance election outcomes, identifying conditions for socially better results through theoretical and experimental analysis.

\section{The Model}\label{sec: model}

We consider a large polity (a society or a blockchain community) that faces a binary choice between {\em A} and {\em B}. There is a group of $m \in \mathbb{N}_+$ individuals (the \textit{delegators}) who do not want to vote and either abstain under conventional voting or delegate their votes if vote delegation is allowed. The remaining population (\textit{voters}) votes. Voters have private information about their preference for {\em A} or {\em B}. A voter prefers alternative {\em A} ({\em B}) with probability $p$ ($1-p$), where $0<p<1$. Without loss of generality, we assume $p>\frac{1}{2}$, designating {\em A} as the {\em ex-ante majority}. Voters favoring {\em A} ({\em B}) are called {\em A}-voters ({\em B}-voters).
The assumption that preferences are private information implies that all voters are alike for delegators and there is no way for delegators to extract information and learn about voters' preferences. Consequently, delegation is performed uniformly at random, that is, every voter has equal chances to receive votes through delegation. \\

Let $\ins = [j]^n_{j=1}$ be a set of $n \geq 2$ voters. Voters have initial weights $w=[w_j]^n_{j=1}$ and delegators have initial weights $D=[d_j]^m_{j=1}$. The weight of an individual represents the multiplicity of their vote. Typically, in blockchain systems, a voter's stake dictates their voting weights. The decision rule is a simple majority weighted voting. 
We compare two voting processes:
\begin{itemize}
    \item Conventional voting: vote delegation is not allowed; thereby delegators abstain and each voter $j$ casts votes equivalent to their weight $w_j$.
    \item Post-delegation voting: vote delegation is allowed. Each of the $m$ delegators delegate their votes to one voter uniformly at random, disregarding the current weights of the voters. Consequently, voters with both lesser and greater initial weights are equally likely to be selected. The total weight of any voter who acquires additional votes through delegation increases in direct proportion to the number of votes delegated to them. In post-delegation voting, voters have weights $w^d=[w^d_j]^n_{j=1}$. Now, each voter $j$ casts votes proportional to their new weight $w^d_j$. Consequently, $\sum_{i=1}^n w_i^d = \sum_{j=1}^n w_j +\sum^{m}_{k=1} d_k.$
\end{itemize}

The power set of $\ins$ is denoted as $2^\ins$. We define a function $\Omega: 2^\ins, \mathbbm{N}^n \to \mathbbm{N}$, where for any set of voters $S\in 2^\ins$ and any vector of voters' weights $\omega$, $\Omega(S, \omega) = \sum_{j \in S} \omega_j$. We denote $P(A \mid \omega)$ as the probability that {\em A} wins, when voters' weights are given by $\omega$. $P(A \mid \omega)$ is computed by the following equation:

\begin{equation}\label{pr_conv}
    P(A \mid \omega) = \sum_{S \in 2^\ins} p^{|S|}(1-p)^{n-|S|}G(S, \omega), 
\end{equation}

where \( p^{|S|}(1-p)^{n-|S|} \) is the probability that exactly the members of \( S \) are \emph{A}-voters, and \( G(S, \omega) \) is the probability that \emph{A} wins given that only the members of \( S \) are \emph{A}-voters. We compute $G(S, \omega)$ as follows: 
\begin{align*}\label{Gformula}
  G(S, \omega):=\begin{cases}
               1, & \text{if } \Omega(S,\omega)>\frac{\Omega(\ins,\omega)}{2}, \\
               \frac{1}{2}, & \text{if } \Omega(S,\omega)=\frac{\Omega(\ins,\omega)}{2},\\
               0, & \text{if } \Omega(S,\omega)<\frac{\Omega(\ins,\omega)}{2}.
            \end{cases}    
\end{align*}

The following example illustrates the defined notations.

\begin{example}\label{example1}
Consider a voting where $\ins = \{1,2,3\}$ represents a set of voters, with corresponding weights $w = (1,2,4)$. Assume that each voter decides whether to vote for {\em A} independently with a probability $p$. The table \ref{tab:my_label} enumerates all possible sets of {\em A-} voters, and the probability of {\em A} winning in each case. 
\begin{table}[h]
\centering
\begin{tabular}{|c|c|c|}
\hline
\( S \) & Prob. \( S \) supporting {\em A} & Prob. {\em A} wins with support of \( S \) \\ \hline
\( \varnothing \) & \( (1-p)^3 \) & 0 \\ \cline{1-3}
\( \{1\} \) & \( p(1-p)^2 \) & 0 \\ \cline{1-3}
\( \{2\} \) & \( p(1-p)^2 \) & 0 \\ \cline{1-3}
\( \{3\} \) & \( p(1-p)^2 \) & 1 \\ \cline{1-3}
\( \{1,2\} \) & \( p^2(1-p) \) & 0 \\ \cline{1-3}
\( \{1,3\} \) & \( p^2(1-p) \) & 1 \\ \cline{1-3}
\( \{2,3\} \) & \( p^2(1-p) \) & 1 \\ \cline{1-3}
\( \{1,2,3\} \) & \( p^3 \) & 1 \\ \hline
\end{tabular}
\caption{Probability of {\em A} winning for various subsets of voters where $w= (1,2,4)$.}
\label{tab:my_label}
\end{table}
It is obvious that {\em A} wins only if the set of {\em A}-voters includes the third voter, as this voter holds the majority of the weights. The probability of {\em A} winning is computed as follows:

$$P(A \mid w) = p(1-p)^2 + 2p^2(1-p) + p^3 = p.$$
\end{example}
We study the effect of vote delegation on the probability that {\em A} wins, when the voters' weight distribution in the conventional voting ($w$) is either of the following:
\begin{itemize}
    \item {\em Equal-Weight (EW)}: all voters have equal weights. Therefore, $w$ is a vector of all ones. Throughout this paper, we denote this distribution with the subscript {\em EW}.

    \item {\em Dominant-Weight (DW)}: one of the voters has more than half of the total weights of voters. Without loss of generality, let $n$ be such a voter, where $w_n > \frac{\Omega(\ins,w)}{2}$. We denote $n$ as the \textit{dominant voter}. \footnote{In Example \ref{example1}, the third voter is the dominant voter.} Throughout this paper, we denote this distribution with the subscript {\em DW}.

\end{itemize}

A delegator's weight indicates the number of votes they transfer to a voter. In our analysis, unless specified differently, we do not assume specific weights for the delegators. This means our results are applicable regardless of the weight distribution of delegators. Let $\iW^d$ be the collection of all possible post-delegation weight vectors of voters. The probability of {\em A} winning in the conventional voting is denoted by $P^c(A)$. The probability of {\em A} winning after \( m \) delegators with weight distribution \( D \) have delegated their votes is denoted by \( P^d(A \mid D) \), and is computed by the following equation:

\begin{equation}\label{pr_deleg_dictator}
    P^d(A\mid D) = \sum_{\omega\in\iW^d} P(w^d = \omega) P(A \mid \omega), 
\end{equation}
where $P(w^d = \omega)$ is the probability that the voters' weight vector is $\omega$ post-delegation, and $P(A \mid \omega)$ is the probability of {\em A} winning given the post-delegation weight vector $\omega$.

\section{Delegation under the \textit{Dominant-Weight} distribution}\label{sec: results1}

In this section, we explore how delegation affects the likelihood of {\em A} winning given the voters' initial weights follow a {\em DW} distribution. In the conventional voting, there exists a voter, known as the dominant voter, who owns the majority of the votes. Initially, it is uncertain which alternative the dominant voter will support. This voter, like every other voter, casts their vote for {\em A} with a probability of $p$. We denote the chance of {\em A} winning by conventional voting as $ P_{DW}^c(A)$, where $DW$ shows that a dominant voter exists. During the delegation process, each delegator assigns their votes to a randomly chosen voter, without taking the voters' weights into account. This implies that low-weight and high-weight voters have an equal chance of being chosen. As a result of delegation, the dominant voter may no longer be the only decisive voter. The process thus allows other voters to influence the outcome in favor of {\em A}, thereby increasing {\em A}'s winning probability, as we will demonstrate later in this section. The probability of {\em A} winning post-delegation, given that the voters' initial weights follow a {\em DW} distribution, is denoted as \( P_{DW}^d(A \mid D) \).

\begin{example}\label{example2}
Building on Example \ref{example1}, assume the initial voter weights vector is $w = (1, 2, 4)$. From our previous calculation, we know that the probability of {\em A} winning in the conventional voting is $P(A \mid w) = p$. Now, we introduce two delegators, each holding a single vote. After delegating their votes, the possible weight configurations of the voters may be any of the vectors within $\iW^d = \{(3, 2, 4), (1, 4, 4), (1, 2, 6), (2, 3, 4), (1, 3, 5), (2, 2, 5)\}$. It can be easily verified that for all these new weight vectors $w^d \in \iW^d$, $P(A \mid w^d) \geq P(A \mid w)$ when $p > 0.5$. Thus, the likelihood of {\em A} winning post-delegation is either unchanged or increased compared to the conventional voting.
\end{example}
This example demonstrates that delegation can dilute the concentration of power, reduce the dependence on a single dominant voter's decision, and thereby increase {\em A}'s chances of winning.

The following theorem demonstrates that the probability of \(A\) winning is lowest when a dominant voter is present. As vote delegation can disrupt this dominance, it benefits {\em A}, the ex-ante majority. 

\begin{theorem}\label{thm1}
$P_{DW}^d(A\mid D) \geq  P_{DW}^c(A)$ for any $m\geq 1$ and $p>0.5$.
\end{theorem}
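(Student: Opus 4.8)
The plan is to reduce the statement to a claim about a single fixed post-delegation weight vector and then analyze that claim by conditioning on the dominant voter. First, on the conventional side: under \emph{DW} the dominant voter $n$ has $w_n>\Omega(\ins,w)/2$, so whichever alternative $n$ supports already holds strictly more than half of the weight. Hence $A$ wins if and only if $n$ is an $A$-voter, giving $P^c_{DW}(A)=p$. Next, by \eqref{pr_deleg_dictator} the quantity $P^d_{DW}(A\mid D)=\sum_{\omega\in\iW^d}P(w^d=\omega)\,P(A\mid\omega)$ is an average of the numbers $P(A\mid\omega)$, so it suffices to prove the pointwise bound $P(A\mid\omega)\ge p$ for every realizable $\omega$ (Example~\ref{example2} is exactly the instance $w=(1,2,4)$ with two unit delegators). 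Averaging then yields the theorem, with equality precisely when delegation never destroys a dominant voter.

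Now fix $\omega$ and write $M:=\Omega(\ins,\omega)/2$, $V:=\Omega(\ins\setminus\{n\},\omega)=\Omega(\ins,\omega)-\omega_n$, and $q:=1-p$. For $r\in(0,1)$ let $W_r(\theta)$ be the probability that a random subset $T\subseteq\ins\setminus\{n\}$, obtained by including each non-dominant voter independently with probability $r$, satisfies $\Omega(T,\omega)>\theta$, counting the tie $\Omega(T,\omega)=\theta$ with weight $\tfrac12$. Conditioning on $n$'s vote gives
\[
P(A\mid\omega)=p\,W_p(M-\omega_n)+q\,W_p(M),
\]
since if $n$ votes $A$ the non-dominant $A$-weight must exceed $M-\omega_n$, while if $n$ votes $B$ it must exceed $M$. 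Because $V=2M-\omega_n$, the two thresholds $M-\omega_n$ and $M$ are symmetric about the non-dominant midpoint $V/2$. The target $P(A\mid\omega)\ge p$ is equivalent to $q\,W_p(M)\ge p\bigl(1-W_p(M-\omega_n)\bigr)$. The complementation $T\mapsto(\ins\setminus\{n\})\setminus T$ turns inclusion probability $p$ into $q$ and sends $\Omega(T,\omega)$ to $V-\Omega(T,\omega)$; it maps the event behind $1-W_p(M-\omega_n)$ onto the event behind $W_q(M)$. Thus the goal becomes the clean tail inequality
\[
q\,W_p(M)\ \ge\ p\,W_q(M),\qquad M\ge V/2 .
\]

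This compares the upper tails of two weighted Bernoulli sums (parameters $p$ and $q$, same weights) at a threshold at or above the midpoint, and it is the crux. Writing it as $\sum_{S}\bigl(q\,p^{|S|}q^{N-|S|}-p\,q^{|S|}p^{N-|S|}\bigr)\ge0$ over the winning sets $S\subseteq\ins\setminus\{n\}$ (with $N=n-1$ and ties weighted by $\tfrac12$), each summand is nonnegative exactly when $|S|>N/2$. The main obstacle is that the summation runs over \emph{weight}-majorities, not \emph{cardinality}-majorities, so a term-by-term comparison fails: a few heavy voters can form a small-cardinality winning set carrying a negative summand. I would resolve this with a symmetric chain decomposition of the Boolean lattice on $\ins\setminus\{n\}$: weight is monotone along each chain, so the winning coalitions form an upper segment, and that segment's contribution telescopes to a single boundary term which is nonnegative precisely when the chain is entered strictly above its minimal element. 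The hypothesis $M\ge V/2$ guarantees that the winning up-set is complement-free (no coalition and its complement both win), and this is the structural property that forces every chain to be entered above its bottom, making each chain contribution nonnegative and closing the argument.
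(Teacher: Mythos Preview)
Your reduction is correct and lands on exactly the inequality the paper isolates: conditioning on voter $n$ and using the complementation $T\mapsto(\ins\setminus\{n\})\setminus T$ gives $P(A\mid\omega)-p=qW_p(M)-pW_q(M)$, which is the paper's $\Delta_1+\tfrac12\Delta_2$ written compactly (their $\iC_1\leftrightarrow\iC_2$ bijection is precisely your complementation step). From there the two arguments diverge. The paper groups the winning family $\iC_2\subseteq 2^{\ins\setminus\{n\}}$ by cardinality and appeals, without proof, to the level inequality $|\iC_2^{\,n-j}|\ge|\iC_2^{\,j}|$ for $j\le n/2$; your symmetric-chain idea is exactly the tool that justifies this step, so in that sense your route is the more complete one.

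Two points need tightening before the SCD argument closes. First, the implication ``complement-free up-set $\Rightarrow$ every chain is entered strictly above its bottom'' requires a \emph{self-complementary} SCD, one in which the complement of each chain's bottom element is the top of the \emph{same} chain (the de~Bruijn--Tengbergen--Kruyswijk / Greene--Kleitman bracket decomposition has this property); for an arbitrary SCD the complement of a chain's bottom may sit on a different chain and your inference fails. Second, the chain contribution does not literally telescope to a single boundary term. With $r=p/q>1$, $N=n-1$, a chain on levels $a,\dots,N-a$ entered at level $\ell$ contributes, up to the positive factor $q^{\,n}/(r-1)$,
\[
\sum_{k=\ell}^{N-a}\bigl(r^{k}-r^{\,n-k}\bigr)=\bigl(r^{\,\ell-a-1}-1\bigr)\bigl(r^{\,n-\ell+1}-r^{\,a+1}\bigr),
\]
a product of two boundary factors; the second is always positive, and the first is $\ge 0$ exactly when $\ell\ge a+1$. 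With these two fixes your chain-by-chain argument goes through and simultaneously yields $qW_p(M)\ge pW_q(M)$ and the unproved level inequality on which the paper's proof rests.
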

\begin{proof}
Without loss of generality, let us assume that the $n$th voter is the dominant voter. We denote $\ins\setminus\{n\}$ as the set of all voters, except for the $n$th voter. The collection of all subsets of $\ins\setminus\{n\}$ is denoted as $2^{\ins\setminus\{n\}}$. From equation~\eqref{pr_conv}, it can be observed that in the conventional voting, {\em A} wins if and only if the dominant voter votes for her/him. In other words, any subset of {\em A}-voters including $n$-th voter results in {\em A} winning. Therefore, the probability of {\em A} winning in the conventional voting, denoted as $ P_{DW}^c(A)$, is computed by the following equation: 

\begin{equation}\label{pr_conv_dict}
     P_{DW}^c(A) = \sum_{S \in 2^{\ins\setminus\{n\}}} p^{|S|+1}(1-p)^{n-|S|-1} = p.
\end{equation}

For the post-delegation voting, let us define $\iW^d_1 = \{\omega \in \iW^d : \exists j \in \ins, \omega^d_j > \Omega(\ins,\omega)/2 \}$. In other words, $\iW^d_1$ is the set of all post-delegation weight vectors that contain a dominant voter. Conversely, $\iW^d_2$ is defined as the complement of $\iW^d_1$ within the set $\iW^d$, representing all post-delegation weights that do not contain a dominant voter. Hence, we can rewrite equation~\eqref{pr_deleg_dictator} as the following: 
\begin{equation}\label{pr_deleg_dictator0}
    P_{DW}^d(A\mid D) = \sum_{\omega\in\iW^d_1} P(w^d = \omega) P(A \mid \omega) + \sum_{\omega\in\iW^d_2} P(w^d = \omega) P(A \mid \omega), 
\end{equation}

For $\omega \in \iW^d_1$, it is easy to observe $P(A \mid \omega) = P_{DW}^c(A) = p$. Consequently, the expression simplifies to:
\begin{equation}\label{pr_deleg_dictator1}
    P_{DW}^d(A\mid D) = \sum_{\omega\in\iW^d_1} P(w^d = \omega) p + \sum_{\omega\in\iW^d_2} P(w^d = \omega) P(A \mid \omega).
\end{equation}
Recall equation~\eqref{pr_conv} for $\omega \in \iW^d_2$:
    \begin{equation}
    P(A \mid \omega) = \sum_{S \in 2^\ins} p^{|S|}(1-p)^{n-|S|}G(S, \omega).
    \end{equation}

    For any set $S \in 2^\ins$, either it does not include the $n$-th voter, the dominant voter in the conventional voting, and thus $S \in 2^{\ins \setminus \{n\}}$, or it includes voter $n$ and therefore $S\setminus \{n\} \in 2^{\ins \setminus \{n\}}$. Therefore, the probability can be expressed as:
        
    \begin{align}\label{pr_new_express}
    \begin{split}
    P(A \mid \omega) = \sum_{S \in 2^{\ins\setminus \{n\}}} &[p^{|S|+1}(1-p)^{n-|S|-1}G(S \cup \{n\}, \omega) \\ & + p^{|S|}(1-p)^{n-|S|}G(S, \omega)],
    \end{split}
    \end{align}    

    where the first term in summation refers to subsets including the $n$-th voter, and the second term refers to subsets excluding her/him. \\
    The difference $P(A \mid \omega) -  P_{DW}^c(A)$ equals to
    \begin{equation}
        \sum_{S \in 2^{\ins\setminus \{n\}}} [p^{|S|+1}(1-p)^{n-|S|-1}(G(S \cup \{n\}, \omega) - 1) + p^{|S|}(1-p)^{n-|S|}G(S, \omega)].
    \end{equation}

    Let us define $\iC_1 = \{S \in 2^{\ins\setminus \{n\}} : G(S \cup \{n\}, \omega) = 0 \}$, $\iC_2 = \{S \in 2^{\ins\setminus \{n\}} : G(S, \omega) =  1 \}$, $\iC_3 = \{S \in 2^{\ins\setminus \{n\}}: G(S \cup \{n\}, \omega) = 1/2 \}$, and $\iC_4 = \{S \in 2^{\ins\setminus \{n\}} : G(S, \omega) =  1/2 \}$. Put simply, $\iC_1$ is the collection of {\em A}-voters such that even if voter $n$ is added to them, {\em A} loses. $\iC_2$ is the collection of {\em A}-voters, without voter $n$, that will result in the victory of {\em A}. $\iC_3$ is the collection of {\em A}-voters such that by the addition of voter $n$, there will be a tie. $\iC_4$ is the collection of {\em A}-voters, excluding voter $n$, that will result in a tie. Using these new notions we obtain
    \begin{equation}\label{c1c2}
        \begin{aligned}
            & P(A \mid \omega) -  P_{DW}^c(A) = \\ & \underbrace{\sum_{S \in \iC_1} -p^{|S|+1}(1-p)^{n-|S|-1} + \sum_{S\in \iC_2} p^{|S|}(1-p)^{n-|S|}}_{\Delta_1} \\ & + \frac{1}{2}\underbrace{[\sum_{S \in \iC_3} -p^{|S|+1}(1-p)^{n-|S|-1} + \sum_{S\in \iC_4} p^{|S|}(1-p)^{n-|S|}]}_{\Delta_2}.
        \end{aligned}
    \end{equation}

By inspecting the definitions of $\iC_1$ and $\iC_2$, it can be observed that for any $S\in \iC_1$, the complement of $S \cup \{n\}$ within the set $\ins$, represented as $(S \cup \{n\})^c$, is a unique set in $\iC_2$. Similarly, for any $S\in \iC_2$, the complement of $S$ excluding $n$, represented as $S^c \setminus \{n\}$, is a unique set in $\iC_1$. The one-to-one correspondence between sets $\iC_1$ and $\iC_2$ allows us to write every set $S \in \iC_1$ as ${S'}^c\setminus \{n\}$ for a unique $S' \in \iC_2$. Therefore, we can write the following:

\begin{equation}
    \Delta_1 = \sum_{S\in \iC_2} -p^{n-|S|}(1-p)^{|S|} + p^{|S|}(1-p)^{n - |S|}.
\end{equation}

Sets in $\iC_2$ can be grouped based on their size, therefore by defining $\iC^j_2 = \{S\in \iC_2: |S| = j\}$, we have that $\iC_2 = \bigcup^{n-1}_{j=1} \iC^j_2$. Hence we obtain
\begin{equation}\label{eq:long}
    \begin{aligned}
     \Delta_1 &= \sum^{n-1}_{j=1} |\iC^j_2|  [-p^{n-j}(1-p)^{j} + p^{j}(1-p)^{n - j}] \\
    &=\sum^{\lfloor n/2 \rfloor}_{j=1} |\iC^j_2| [-p^{n-j}(1-p)^{j} + p^{j}(1-p)^{n - j}] \\ & \quad + 
    \sum^{n-1}_{j=\lfloor n/2 \rfloor + 1} |\iC^j_2| [-p^{n-j}(1-p)^{j} + p^{j}(1-p)^{n - j}] \\
    &= \sum^{\lfloor n/2 \rfloor}_{j=1} |\iC^j_2| [-p^{n-j}(1-p)^{j} + p^{j}(1-p)^{n - j}] \\ & \quad + |\iC^{n-j}_2| [- p^{j}(1-p)^{n - j}+p^{n-j}(1-p)^{j}] \\
    &\geq \sum^{\lfloor n/2 \rfloor}_{j=1} |\iC^j_2|[-p^{n-j}(1-p)^{j} + p^{j}(1-p)^{n - j} \\ & \quad - p^{j}(1-p)^{n - j}+p^{n-j}(1-p)^{j}] \\
    &= 0.
\end{aligned}
\end{equation}

The first equality arises because $\iC_2 = \bigcup_{j=1}^{n-1} \iC^j_2$. The second equality follows by decomposing the sum. The third equality results from a change of variable in the second summation. The inequality is derived from the fact that $|\iC^{n-j}_2| \geq |\iC^j_2|$ for all $j \in [1, \lfloor n/2 \rfloor]$, noting that if $j = n/2$, the corresponding term in the brackets is zero. Finally, the last equality holds because each term in the brackets is zero.

So far, we have shown that $\Delta_1 \geq 0$. The steps for showing $\Delta_2 \geq 0$ are very similar. First, one can show that there exists a one-to-one correspondence between the sets $\iC_3$ and $\iC_4$. Therefore, every set $S \in \iC_3$ can be written as ${S'}^c \setminus \{n\}$ for a unique $S' \in \iC_4$, allowing us to write $\Delta_2$ as follows:
\begin{equation}
    \Delta_2 = \sum_{S \in \iC_4} -p^{n - |S|}(1 - p)^{|S|} + p^{|S|}(1 - p)^{n - |S|}.
\end{equation}
Similar to what we did for $\iC_2$, we can group the sets in $\iC_4$ based on their size. Defining $\iC^j_4 = \{S \in \iC_4 : |S| = j\}$, we have that $\iC_4 = \bigcup_{j=1}^{n-1} \iC^j_4$. Taking a closer look at $\Delta_2$, we can write a set of arguments analogous to those in equation \eqref{eq:long} by replacing $\iC^j_2$ with $\iC^j_4$, and show that $\Delta_2 \geq 0$. 
Finally, since $\Delta_1 \geq 0$ and $\Delta_2 \geq 0$, from equation \eqref{c1c2} it is evident that $P(A \mid \omega) \geq P_{DW}^c(A)$, For all $\omega \in \iW^d_2$, any $m\geq 1$ and $p>0.5$. By combining this result with equations \eqref{pr_conv_dict} and \eqref{pr_deleg_dictator1}, we obtain $P_{DW}^d(A\mid D) \geq  P_{DW}^c(A)$ and this concludes the proof of Theorem \ref{thm1}.
\end{proof}

\section{Delegation under the {\textit{Equal-Weight}} distribution}\label{sec: results2}

In this section, we examine the impact of vote delegation on the probability of {\em A} winning given that the voters' initial weights follow an {\em EW} distribution. Under this assumption, the probability of {\em A} winning in conventional voting is denoted as \( P_{EW}^c(A) \), and the probability of {\em A} winning post-delegation is denoted as $P_{EW}^d(A\mid D)$. The following theorem demonstrates that when all voters have equal weight, the probability of \(A\) winning is maximized. Since vote delegation can disrupt this equality, it undermines the ex-ante majority and thereby benefits the ex-ante minority.

\begin{theorem}\label{thm_balance}
$P_{EW}^c(A) \geq P_{EW}^d(A\mid D)$ for any $m\geq 1$ and $p>0.5$.
\end{theorem}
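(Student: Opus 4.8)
The plan is to avoid enumerating post-delegation weight vectors altogether and instead compare $P^c_{EW}(A)$ and $P^d_{EW}(A\mid D)$ term by term over complementary pairs of voter subsets. Writing $P(S):=p^{|S|}(1-p)^{n-|S|}$ for the probability that $S$ is exactly the set of $A$-voters, Equations~\eqref{pr_conv} and~\eqref{pr_deleg_dictator} give $P^c_{EW}(A)=\sum_{S\subseteq\ins}P(S)\,G(S,\mathbf 1)$ and $P^d_{EW}(A\mid D)=\sum_{S\subseteq\ins}P(S)\,\bar G(S)$, where $\mathbf 1$ is the all-ones (EW) weight vector and $\bar G(S):=\sum_{\omega\in\iW^d}P(w^d=\omega)\,G(S,\omega)$ is the delegation-averaged win indicator. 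The first key observation I would establish is the exact complementarity $G(S,\omega)+G(\ins\setminus S,\omega)=1$, valid for \emph{every} weight vector $\omega$: since $\Omega(S,\omega)+\Omega(\ins\setminus S,\omega)=\Omega(\ins,\omega)$, either exactly one of the two sets holds a strict weight majority, or both tie and contribute $\tfrac12$. Averaging over $\omega$ preserves the identity, so $\bar G(S)+\bar G(\ins\setminus S)=1$ as well.

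Next I would partition $2^\ins$ into complementary pairs $\{R,\ins\setminus R\}$, always choosing the representative $R$ with $|R|\ge n/2$. Substituting $G(\ins\setminus R,\mathbf 1)=1-G(R,\mathbf 1)$ and $\bar G(\ins\setminus R)=1-\bar G(R)$, the contribution of each pair to the difference collapses to
\[
\bigl(P(R)-P(\ins\setminus R)\bigr)\,\bigl(G(R,\mathbf 1)-\bar G(R)\bigr),
\]
so that $P^c_{EW}(A)-P^d_{EW}(A\mid D)=\sum_{\text{pairs}}\bigl(P(R)-P(\ins\setminus R)\bigr)\bigl(G(R,\mathbf 1)-\bar G(R)\bigr)$. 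It then suffices to show that both factors are nonnegative for each representative $R$.

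The first factor is nonnegative because $p>\tfrac12$ and $|R|\ge|\ins\setminus R|$ give $P(R)/P(\ins\setminus R)=\bigl(p/(1-p)\bigr)^{2|R|-n}\ge 1$. For the second factor, under EW a strict majority by count is a strict majority by weight: if $|R|>n/2$ then $G(R,\mathbf 1)=1\ge\bar G(R)$ holds trivially since $G\le 1$. The only delicate case is the knife-edge $|R|=n/2$ (possible only for even $n$), where $G(R,\mathbf 1)=\tfrac12$ and I must prove $\bar G(R)\le\tfrac12$. Here I would invoke the symmetry of uniform delegation: since each delegator picks a voter uniformly and independently, the post-delegation vector $w^d$ is exchangeable in the voter indices (this holds even for heterogeneous delegator weights), so $\Omega(R,w^d)$ and $\Omega(\ins\setminus R,w^d)$ are identically distributed when $|R|=|\ins\setminus R|$; hence $\bar G(R)=\bar G(\ins\setminus R)$, and as they sum to $1$ each equals $\tfrac12=G(R,\mathbf 1)$. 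This gives $G(R,\mathbf 1)\ge\bar G(R)$ in all cases and closes the argument.

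I expect the main obstacle to be conceptual rather than computational: the effect of delegation is \emph{not} sign-definite configuration by configuration—boosting an $A$-voter helps $A$ while boosting a $B$-voter hurts it—so a naive coupling or a comparison conditional on the number of $A$-voters fails, because for counts just below $n/2$ delegation can manufacture $A$-wins while for counts just above $n/2$ it can destroy them. The complementary-pairing identity is exactly what converts these competing effects into a single product of two manifestly nonnegative terms, cleanly separating the role of the ex-ante bias $p>\tfrac12$ (first factor) from the role of weight equality (second factor). The one point requiring genuine care is the even-$n$ tie case, which is precisely where the uniformity/exchangeability of delegation—rather than merely the bound $G\le 1$—is indispensable.
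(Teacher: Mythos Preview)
Your argument is correct, and at its core it uses the same engine as the paper's proof: the identity $G(S,\omega)+G(\ins\setminus S,\omega)=1$ together with $P(R)\ge P(\ins\setminus R)$ whenever $|R|\ge n/2$ and $p>\tfrac12$. The organizational route differs, however. The paper fixes a single post-delegation vector $\omega$ and establishes the stronger pointwise bound $P^c_{EW}(A)\ge P(A\mid\omega)$ via a five-way case split of subsets (according to whether $|S|$ and $\Omega(S,\omega)$ lie below, at, or above half), producing three pieces $\Delta_1\ge0$, $\Delta_2=0$, $\Delta_3\ge0$. You average over $\omega$ first and collapse everything into a single product $(P(R)-P(\ins\setminus R))(G(R,\mathbf 1)-\bar G(R))$ per complementary pair, which is tidier and makes the separation of roles (the bias $p>\tfrac12$ in the first factor, weight equality in the second) transparent. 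Your approach also yields the paper's pointwise inequality immediately by running the same calculation with $G(\cdot,\omega)$ in place of $\bar G$.

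One simplification you overlooked: in the knife-edge case $|R|=n/2$, the first factor $P(R)-P(\ins\setminus R)$ already vanishes, since both equal $p^{n/2}(1-p)^{n/2}$. Hence the exchangeability argument showing $\bar G(R)=\tfrac12$ is unnecessary; the trivial bound $G(R,\mathbf 1)=1\ge\bar G(R)$ for $|R|>n/2$ is all that is needed for the second factor.
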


\begin{proof}
    
To present voters' initial weights, define $\mathbf{1}_n$ as an $n$-dimensional vector of ones. In this setting, $G(S, \mathbf{1}_n)$ equals 1 if $|S| > \frac{n}{2}$, equals $0$ if $|S| < \frac{n}{2}$, and equals $\frac{1}{2}$ if  $|S| = \frac{n}{2}$. The probability of {\em A} winning in the conventional voting denoted as $ P_{EW}^c(A)$ is computed by the following equation:
\begin{equation}\label{p_balance}
 P_{EW}^c(A) = \sum_{S \in 2^\ins} p^{|S|} (1-p)^{n-|S|} G(S, \mathbf{1}_n). 
\end{equation}

The probability of {\em A} winning post-delegation, denoted as $P_{EW}^d(A\mid D)$ is obtained from equation~\eqref{pr_deleg_dictator}, where $\iW^d$ is the collection of all possible post-delegation weights of voters. Since $\sum_{\omega\in\iW^d} P(w^d = \omega) = 1$, to prove Theorem~\ref{thm_balance}, it suffices to show that $P_{EW}^c(A) \geq P(A \mid \omega)$ for all $\omega \in \iW^d$. According to the definition of $P(A \mid \omega)$ in \ref{pr_conv}, we can write the following: 
\begin{equation}\label{diff_balance}
     P_{EW}^c(A) - P(A \mid \omega) = \sum_{S \in 2^\ins} p^{|S|} (1-p)^{n-|S|} [G(S, \mathbf{1}_n) - G(S, \omega)].
\end{equation}

For any $\omega \in \iW^d$, define the following collections for subsets of voters: \\
$\iC_1 = \{S \in 2^\ins: |S| < \frac{n}{2} \land \Omega(S,\omega) > \frac{\Omega(\ins,\omega)}{2}\}$
$\iC_2 = \{S \in 2^\ins: |S| > \frac{n}{2} \land \Omega(S,\omega) < \frac{\Omega(\ins,\omega)}{2}\}$, $\iC_3 = \{S \in 2^\ins: |S| = \frac{n}{2} \land \Omega(S,\omega) > \frac{\Omega(\ins,\omega)}{2}\}$, $\iC_4 = \{S \in 2^\ins: |S| = \frac{n}{2} \land \Omega(S,\omega) < \frac{\Omega(\ins,\omega)}{2}\}$
and $\iC_5 =  \{S \in 2^\ins: \Omega(S,\omega) = \frac{\Omega(\ins,\omega)}{2}\}$.
In essence, $\iC_1$ contains subset of voters who, despite being fewer than half of the total voters, possess sufficient collective voting weight to dictate the election outcome by supporting a specific alternative. Conversely, $\iC_2$ contains subsets of voters who outnumber the rest, yet, their collective weight is less than the remaining voters. $\iC_3$ consists of subsets of {\em A}-voters that include half of the voters and lead to {\em A} winning. Conversely, $\iC_4$ comprises subsets of {\em A}-voters that include half of the voters and result in {\em A} losing. Additionally, $\iC_4$ includes subsets of {\em A}-voters that lead to a tie.

The terms in the summand of \eqref{diff_balance} can be non-zero only if $S \in \iC_1 \cup \iC_2 \cup \iC_3 \cup \iC_4 \cup \iC_5$. Therefore, this equation can be rewritten as follows:
\begin{equation}\label{eq:diffEW}
\begin{aligned}
         P_{EW}^c(A) - P(A \mid \omega) =& \underbrace{\sum_{S \in \iC_2} p^{|S|} (1-p)^{n-|S|} - \sum_{S \in \iC_1} p^{|S|} (1-p)^{n-|S|}}_{\Delta_1} \\
         & + \underbrace{\sum_{S \in \iC_3\cup \iC_4} p^{\frac{n}{2}} (1-p)^{\frac{n}{2}} [\frac{1}{2} - G(S, \omega)]}_{\Delta_2} \\ & + \underbrace{\sum_{S \in \iC_5} p^{|S|} (1-p)^{n-|S|} [G(S, \mathbf{1}_n) - \frac{1}{2}]}_{\Delta_3}.
         \end{aligned}
\end{equation}
We analyze $\Delta_1$, $\Delta_2$, and $\Delta_3$ separately.
By inspecting the definitions of $\iC_1$ and $\iC_2$, we observe that the complement set of every set in $\iC_1$ is in $\iC_2$, and the complement set of every set in $\iC_2$ is in $\iC_1$. Therefore we can write every set $S\in \iC_2$ as the complement of a set $S^c\in \iC_1$, and obtain: 
\begin{equation}
\begin{aligned}
     & \Delta_1 = \sum_{S \in \iC_1}  p^{n - |S|} (1-p)^{|S|} -p^{|S|} (1-p)^{n-|S|} 
     \end{aligned}
\end{equation}
Since all sets $S \in \iC_1$, contain less than half of the voters, and $p > 0.5$, each term in the above summand is non-negative. Hence $\Delta_1 \geq 0$. 

Observe that for every $S\in \iC_3$, $S^c\in \iC_4$, and for every $S\in \iC_4$, $S^c\in \iC_3$. Therefore, 
\begin{equation}
\begin{aligned}
    \Delta_2 = &\sum_{S \in \iC_3} p^{\frac{n}{2}} (1-p)^{\frac{n}{2}} [\frac{1}{2} - G(S, \omega)] + p^{\frac{n}{2}} (1-p)^{\frac{n}{2}} [\frac{1}{2} - G(S^c, \omega)] = \\& \sum_{S\in \iC_3} -\frac{1}{2}p^{\frac{n}{2}} (1-p)^{\frac{n}{2}} + \frac{1}{2}p^{\frac{n}{2}} (1-p)^{\frac{n}{2}} = 0.
\end{aligned}
\end{equation}
we can group the sets in $\iC_5$ based on their size. Defining $\iC^j_5 = \{S \in \iC_5 : |S| = j\}$, we have that $\iC_5 = \bigcup_{j=1}^{n-1} \iC^j_5$. Now we can write $\Delta_3$ as following:
\begin{equation}
\begin{aligned}
        \Delta_3 =& \sum^{n-1}_{j=1} |\iC^j_5| p^{j} (1-p)^{n-j} [G(S, \mathbf{1}_n) - \frac{1}{2}] \\
        & = \sum^{\lceil \frac{n}{2}\rceil -1}_{j=1} - \frac{1}{2}|\iC^j_5| p^{j} (1-p)^{n-j} + 
        \sum^{n}_{j=\lfloor \frac{n}{2}\rfloor +1} \frac{1}{2}|\iC^j_5| p^{j} (1-p)^{n-j} \\
        & = \sum^{\lceil \frac{n}{2}\rceil -1}_{j=1} - \frac{1}{2}|\iC^j_5| p^{j} (1-p)^{n-j} + \frac{1}{2}|\iC^{n-j}_5| p^{n-j} (1-p)^{j},
\end{aligned}
\end{equation}
where the first equality follows by decomposing the sum, and the second equality results from a change of variable in the second summation. Now observe that for every set $S\in \iC^j_5$, $S^c\in \iC^{n-j}_5$, and for every set $S\in \iC^{n-j}_5$, $S^c\in \iC^j_5$. Therefore, for every $j$, $|\iC^j_5| = |\iC^{n-j}_5|$. Hence,
\begin{equation}
    \begin{aligned}
        \Delta_3 = \sum^{\lceil \frac{n}{2}\rceil -1}_{j=1} \frac{1}{2}|\iC^j_5| [p^{n-j} (1-p)^{j} - p^{j} (1-p)^{n-j}].
    \end{aligned}
\end{equation}
Given that $p > 0.5$ and $j < \frac{n}{2}$ in all terms of the summand, each term is non-negative. Consequently, this ensures that $\Delta_3 \geq 0$.
Now combining $\Delta_1 \geq 0$, $\Delta_2 = 0$, and $\Delta_3 \geq 0$ with equation \eqref{eq:diffEW}, $P_{EW}^c(A) - P(A \mid \omega) \geq 0$, for all $\omega \in \iW^d$.
\end{proof}

\section{Asymptotic Behavior of Delegation}\label{sec:asymptotic}

In this section, we study the asymptotic behavior of delegation under three different assumptions. In the first, we show that if the probability that voters vote for {\em A} goes to 1, the probability of {\em A} winning goes to 1 as well. In the second case, we demonstrate that when all delegators have equal weights and their number approaches infinity, the weight distribution of voters converges to the {\em EW} distribution. Notably, this result holds regardless of the initial weight distribution of the voters. However, if the delegators have an arbitrary weight distribution, this convergence does not occur. Finally, we prove that when all voters have equal weight, the probability of \( A \) winning converges to 1 as the number of voters approaches infinity. Moreover, this result holds even in the presence of any number of same-weight delegators.

First, we obtain a rather straightforward proposition:

\begin{proposition}\label{prop: lim_p_to_1}
For any fixed natural number $m \geq 1$, we have $\lim_{p\rightarrow 1}P^d(A\mid D,p)=1$. 
\end{proposition}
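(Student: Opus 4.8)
The plan is to show that as $p \to 1$, the dominant contribution to $P^d(A \mid D, p)$ comes from the event in which \emph{every} voter is an \emph{A}-voter, and that this single event already forces \emph{A} to win with certainty regardless of how delegation redistributes weights. First I would recall from equation~\eqref{pr_deleg_dictator} that $P^d(A \mid D, p) = \sum_{\omega \in \iW^d} P(w^d = \omega)\, P(A \mid \omega)$, and that the delegation probabilities $P(w^d = \omega)$ do not depend on $p$ (delegators choose voters uniformly at random, independently of preferences). Hence it suffices to establish the lower bound $P(A \mid \omega) \geq p^n$ uniformly over all $\omega \in \iW^d$, and then take the limit.

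The key observation is that for any weight vector $\omega$, the term in the expansion \eqref{pr_conv} corresponding to $S = \ins$ (the whole voter set) contributes $p^{|\ins|}(1-p)^0 \, G(\ins, \omega) = p^n \cdot 1$, since $\Omega(\ins, \omega) > \tfrac{1}{2}\Omega(\ins, \omega)$ always holds when the total weight is positive. Because every summand in \eqref{pr_conv} is nonnegative, dropping all other terms yields $P(A \mid \omega) \geq p^n$ for each $\omega$. Combining this with $\sum_{\omega} P(w^d = \omega) = 1$ gives the uniform lower bound
\begin{equation*}
    P^d(A \mid D, p) \;=\; \sum_{\omega \in \iW^d} P(w^d = \omega)\, P(A \mid \omega) \;\geq\; p^n \sum_{\omega \in \iW^d} P(w^d = \omega) \;=\; p^n.
\end{equation*}

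Since $P^d(A \mid D, p)$ is also a probability and hence at most $1$, we have the sandwich $p^n \leq P^d(A \mid D, p) \leq 1$. As $p \to 1$ with $n$ fixed, $p^n \to 1$, so by the squeeze theorem $\lim_{p \to 1} P^d(A \mid D, p) = 1$. I do not anticipate a serious obstacle here: the proof is genuinely short, and the only point requiring a little care is confirming that the number of voters $n$ (and the structure of $\iW^d$) stays fixed as $p$ varies, so that $p^n$ is a legitimate $p$-independent exponent and the lower bound is uniform in $\omega$. The hardest part is merely articulating cleanly that the full-support event $S = \ins$ is decisive under \emph{every} post-delegation configuration, which follows immediately from the definition of $G$.
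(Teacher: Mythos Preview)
Your proof is correct and follows essentially the same idea as the paper's: both arguments hinge on the fact that the term $S=\ins$ in \eqref{pr_conv} contributes $p^n G(\ins,\omega)=p^n$ for every $\omega$, and that this term alone forces $P(A\mid\omega)\to 1$ as $p\to 1$. The paper simply passes the limit inside the finite double sum and observes that only the $S=\ins$ summand survives, whereas you phrase the same observation as the lower bound $P(A\mid\omega)\geq p^n$ and invoke the squeeze theorem; the content is identical.
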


\begin{proof}
We can verify that $P^d(A\mid D,p)=1$ when $p=1$. 
\begin{align*}
    &\lim_{p\rightarrow 1} P^d(A\mid D,p)  \\ &=    
     \sum_{\omega\in\iW^d} P(w^d = \omega) \lim_{p\rightarrow 1}
    \underbrace{[\sum_{S \in 2^\ins} p^{|S|}(1-p)^{n-|S|}G(S,\omega)]}_{=1} = 1.
\end{align*}
Indeed, only the term $S= \ins$ survives.
\end{proof}

Next, we show that when the number of voters is odd, the probability that {\em A} wins post-delegation converges to the probability that {\em A} wins when voters' weights follow an {\em EW} distribution, given delegators have equal weights, and their number approaches infinity. To present delegators' weights, define $\mathbf{1}_m$ as an $m$-dimensional vector of all ones. The probability of {\em A} winning in the {\em EW} model is the same as $P_{EW}^{c}(A)$.

\begin{proposition}\label{P_lim}
$\lim_{m\rightarrow \infty}P^d(A\mid \mathbf{1}_m)= P_{EW}^c(A)$ for any $p\in(0,1)$.
\end{proposition}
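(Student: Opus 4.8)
The plan is to show that as the number of equal-weight delegators $m\to\infty$, the random post-delegation weight vector $w^d$, \emph{after normalization}, concentrates so strongly around the uniform profile that the majority decision almost surely agrees with the decision made under the \textit{EW} weights $\mathbf{1}_n$. First I would set up the probabilistic model explicitly: with $m$ unit-weight delegators each choosing one of the $n$ voters uniformly and independently, the vector of received delegations $(X_1,\dots,X_n)$ is $\mathrm{Multinomial}(m;\tfrac1n,\dots,\tfrac1n)$, and the post-delegation weight of voter $j$ is $w^d_j = w_j + X_j$. By the strong law of large numbers (or a standard concentration bound such as Chernoff/Hoeffding for the multinomial coordinates), each $X_j/m \to \tfrac1n$ almost surely, so $w^d_j/m \to \tfrac1n$ for every $j$ regardless of the fixed initial weights $w$.

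The key observation I would then exploit is that the majority outcome $G(S,\omega)$ is \emph{scale-invariant} and depends only on the sign of $\Omega(S,\omega) - \tfrac12\Omega(\ins,\omega)$. Fix a subset $S\subseteq\ins$. Dividing by $m$, the decisive quantity becomes $\sum_{j\in S} w^d_j/m - \tfrac12\sum_{j} w^d_j/m$, which converges almost surely to $\tfrac{|S|}{n} - \tfrac12 = \tfrac{1}{n}\bigl(|S| - \tfrac{n}{2}\bigr)$. Because $n$ is odd, $|S|-\tfrac n2$ is never zero, so this limit is strictly positive exactly when $|S|>\tfrac n2$ and strictly negative exactly when $|S|<\tfrac n2$ --- precisely the two cases defining $G(S,\mathbf{1}_n)$. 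Hence for each fixed $S$, with probability tending to $1$ the post-delegation weights are strictly bounded away from a tie in the same direction as the \textit{EW} verdict, giving $G(S,w^d)\to G(S,\mathbf{1}_n)$ almost surely, and in particular $P\bigl(G(S,w^d)=G(S,\mathbf{1}_n)\bigr)\to 1$.

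To convert this into the claimed limit I would write $P^d(A\mid\mathbf{1}_m)$ via the law of total probability as $\sum_{S\in 2^\ins} p^{|S|}(1-p)^{n-|S|}\,\mathbb{E}\bigl[G(S,w^d)\bigr]$, interchanging the (finite) sum over $S$ with the expectation over the delegation randomness exactly as in equations \eqref{pr_conv} and \eqref{pr_deleg_dictator}. Since $0\le G(S,w^d)\le 1$ is bounded and converges in probability (indeed almost surely) to the constant $G(S,\mathbf{1}_n)$, bounded convergence gives $\mathbb{E}[G(S,w^d)] \to G(S,\mathbf{1}_n)$ for each of the finitely many $S$. Summing the $2^n$ terms with their fixed Bernoulli coefficients yields $\lim_{m\to\infty} P^d(A\mid\mathbf{1}_m) = \sum_{S} p^{|S|}(1-p)^{n-|S|} G(S,\mathbf{1}_n) = P_{EW}^c(A)$ by equation \eqref{p_balance}.

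The main obstacle, and the place where the oddness hypothesis is essential, is controlling the boundary case $\Omega(S,w^d)=\tfrac12\Omega(\ins,w^d)$ where $G$ takes the value $\tfrac12$ rather than a clean $0$ or $1$. My plan handles this by noting that when $n$ is odd the limiting margin $\tfrac1n(|S|-\tfrac n2)$ is bounded away from $0$, so the probability of landing on (or near) a tie vanishes; a careful argument should either bound $P\bigl(\Omega(S,w^d)=\tfrac12\Omega(\ins,w^d)\bigr)$ directly or simply absorb it into the almost-sure convergence, since a null-probability event does not affect the limit of a bounded expectation. I would also double-check the parity subtlety that $\Omega(\ins,w^d)$ can be even or odd, but because the convergence is in terms of normalized weights whose limiting margin is strictly nonzero, the exact arithmetic tie contributes negligibly in the limit, so the argument goes through for all $p\in(0,1)$.
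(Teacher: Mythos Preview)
Your proof is correct and follows essentially the same strategy as the paper: both write $P^d(A\mid\mathbf{1}_m)$ as the finite sum $\sum_{S\in 2^{\ins}} p^{|S|}(1-p)^{n-|S|}\,\mathbb{E}[G(S,w^d)]$ and reduce everything to showing $\mathbb{E}[G(S,w^d)]\to G(S,\mathbf{1}_n)$ for each fixed $S$, using the oddness of $n$ to exclude the boundary case $|S|=n/2$. The only difference is in the concentration step---the paper treats the total number of delegations landing in $S$ as a single $\mathrm{Bin}(m,|S|/n)$ variable and bounds its tails explicitly with Hoeffding's inequality, whereas you argue via the (S)LLN on the multinomial coordinates together with bounded convergence; both routes deliver the same conclusion that the normalized margin concentrates at $\tfrac{1}{n}\bigl(|S|-\tfrac{n}{2}\bigr)\neq 0$.
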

\begin{proof}
Probability of {\em A} winning when voters' distribution is {\em EW} is obtained in equation~\eqref{p_balance}. Let $w = [w_j]_{j=1}^{n}$ be the vector representing the initial weights of the voters. Since all delegators have equal weight, the probability of {\em A} winning post-delegation can be written as the following: 
\begin{equation}\label{prob_deleg_equal}
    P_{EW}^d(A\mid \mathbf{1}_m) = \sum_{S \in 2^\ins} p^{|S|}(1-p)^{n-|S|} G(S,m),
\end{equation}
where $G(S,m)$ is the probability of {\em A} winning if only voters in $S$ vote for A, while $m$ delegators with same weight have delegated their vote. $G(S,m)$ is obtained from the following:
\begin{equation}
    G(S,m) = \sum^m_{h=0} \binom{m}{h} (\frac{|S|}{n})^h (1-\frac{|S|}{n})^{m-h} g(S, h, m),
\end{equation}
where $g(S, h, m)$ is the probability that {\em A} wins if $h$ votes are delegated to the voters in $S$, and only those voters vote for A. We compute $g(S, h, m)$ as follows: 
\begin{align*}
  g(S, h, m) :=\begin{cases}
               1, & \text{if } \sum_{j\in S} w_j + h >\frac{\sum^n_{j=1}w_j + m}{2}, \\
               \frac{1}{2}, & \text{if } \sum_{j\in S} w_j + h = \frac{\sum^n_{j=1}w_j + m}{2},\\
               0, & \text{if } \sum_{j\in S} w_j + h < \frac{\sum^n_{j=1}w_j + m}{2}.
            \end{cases}    
\end{align*}
To prove the Proposition, we need to show that $lim_{m\rightarrow \infty} G(S,m) \to G(S, \mathbf{1}_n)$ for any fixed set of voters $S\in 2^\ins$.     Now, let $a = \frac{\sum_{j \notin S} w_j -\sum_{j\in S} w_j
 + m}{2}$ and $q = \frac{|S|}{n}$. Using the new notations, $G(S,m)$ is as follows:
    \begin{equation}
        G(S,m) = \sum^m_{h=0} \binom{m}{h} q^h (1-q)^{m-h} g(S, h, m),
    \end{equation}
    where given $g(S,h,m)$, 
    \begin{equation}
        \sum^m_{h=\lceil a \rceil+1} \binom{m}{h} q^h (1-q)^{m-h} \leq G(S,m) \leq  \sum^m_{h=\lceil a \rceil} \binom{m}{h} q^h (1-q)^{m-h}.
    \end{equation}
    The left summand above is equal to the probability $P[X \geq \lceil a \rceil +1]$
    for a random variable $X \sim Bin(m, q)$. Also, the right summand is equal to the probability $P[X \geq \lceil a \rceil ]$ for the same random variable. Hence, 
    \begin{equation}
        P[X \geq \lceil a \rceil +1] \leq G(S,m) \leq P[X \geq a].
    \end{equation}
    We use Hoeffding’s inequality to bound $G(S,m)$ for two different cases: 
\begin{itemize}
    \item if $q < \frac{1}{2}$:\\
    Since $|S| < \frac{n}{2}$, $G(S, \mathbf{1}_n) = 0$. \\
    Hoeffding's Inequality yields: 
\begin{equation*}
    P[X \geq m (q+\epsilon)] \leq \exp(-2\epsilon^2 m).
\end{equation*}
We compute $\epsilon$ by solving $m(q+\epsilon) = a$:
\begin{align*}
 \epsilon = \frac{\sum_{j\notin S} w_j  -\sum_{j\in S} w_j +m}{2m}-q.
\end{align*}

Let $t:=\sum_{j\notin S} w_j -\sum_{j\in S} w_j$, then by Hoeffding's Inequality:
\begin{align*}
    P[X \geq a] &\leq \exp(-2 (\frac{\sum_{j\notin S} w_j  -\sum_{j\in S} w_j +m}{2m}-q)^2 m)\\
    &= \exp(-m(2q^2-2q+\frac{1}{2}) -t-\frac{t^2}{2m}+2qt).
\end{align*}
The right-hand side (RHS) goes to $0$ for $m \to \infty$ if $2q^2-2q+\frac{1}{2}>0$. This inequality is true for any $q \neq \frac{1}{2}$. As $q<\frac{1}{2}$, the RHS goes to $0$ for  $m \to \infty$. That is,
\begin{equation*}
    \lim_{ m \to \infty } P[X \geq a] = 0.
\end{equation*}
Hence, for $q < \frac{1}{2}$, we have $\lim_{m \to \infty}G(S,m) =G(S, \mathbf{1}_n) = 0$.

\item if $q > \frac{1}{2}$:\\
    Since $|S| > \frac{n}{2}$, $G(S, \mathbf{1}_n) = 1$. \\
    Hoeffding's Inequality yields: 
    $$P[X < m (q-\epsilon)] \leq \exp(-2\epsilon^2 m).$$
We find $\epsilon$ by solving 
$m(q-\epsilon) = a+2$: 
$$ \epsilon = q - \frac{\sum_{j\notin S} w_j -\sum_{j\in S} w_j+m+2}{2m}.$$

Let $t:=\sum_{j\notin S} w_j  -\sum_{j\in S} w_j$, then, by Hoeffding's Inequality:
\begin{align*}
    P[X < \lceil a \rceil +1] & \leq P[X < a+2] \\ & \leq \exp(-2 (q-\frac{\sum_{j\notin S} w_j  -\sum_{j\in S} w_j+m+2}{2m})^2 m)\\
    &= \exp(-m(2q^2-2q+\frac{1}{2}) 
    -\frac{(t+4)^2}{2m} + (2q-1)(t+4)
    ).
\end{align*}
The RHS of the latter converges to $0$ for $m \to \infty$ if $2q^2-2q+\frac{1}{2}>0$. This inequality is true for any $q \neq \frac{1}{2}$. As $q>\frac{1}{2}$, the RHS converges to $0$ for  $m \to \infty$. That is, $\lim_{ m \to \infty } P[X < \lceil a \rceil +1] = 0.$
Therefore, $\lim_{ m \to \infty } P[X \geq \lceil a \rceil +1] = \lim_{ m \to \infty } (1-P[X < \lceil a \rceil +1]) = 1.$ Hence, for $q > \frac{1}{2}$, we have $\lim_{m \to \infty} G(S,m) = G(S, \mathbf{1}_n) = 1$

\end{itemize}
Since the number of voters is odd, $q \neq \frac{1}{2}$ and therefore the above analysis shows that $\lim_{m \to \infty} G(S,m) = G(S, \mathbf{1}_n)$ for any fixed set of voters $S$.    
\end{proof}

The following example shows that Proposition \ref{P_lim} fails if the delegators have arbitrary weight distributions. In other words, for any number of delegators $m$, we can construct the weight distribution of delegators in such a way that the probability of {\em A} winning post-delegation is the same as the probability of {\em A} winning when a dominant voter exists, and this probability is generally lower than the probability of {\em A} winning when voters follow an {\em EW} distribution, according to Theorems \ref{thm1} and \ref{thm_balance}. 

\begin{example}
    Let $w = [w_j]^{n}_{j=1}$ be the vector of voters' weights before delegation. Let all delegators except one, have weight $\epsilon$ for any arbitrary $\epsilon > 0$. The remained delegator, is more powerful than all others combined, having weight $m\epsilon + \sum_{j\in \ins} w_j$. In post-delegation voting, the voter that received the powerful delegator's delegated votes, has a higher weight than all other voters combined. Therefore, a dominant voter exists in the system, hence the probability of {\em A} winning post-delegation is probability of {\em A} winning when a dominant voter exists. 
\end{example}

At the end of this section, we consider a large election where all voters and delegators have an equal weight. Large elections are modeled using a Poisson random variable with parameter $n$. It is straightforward to see that with a constant number
of delegators, the probability that {\em A} wins converges to one as the total population size
converges to infinity. We show that the same holds even if there are arbitrarily many delegators.
In particular, the result holds even if the number of delegators is a function of $n$.

\begin{proposition}\label{prop: lim_n_to_inf}
$\lim_{n\rightarrow \infty}P_{EW}^d(A\mid \mathbf{1}_m ,n)=1$ for any fixed $p>0.5$ and any $m$, where $m$ can even depend on $n$.
\end{proposition}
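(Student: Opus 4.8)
The plan is to condition on the realized electorate, which turns the win condition into a single binomial tail bound. Model the large election as in the text: the numbers of \emph{A}-voters and \emph{B}-voters, $N_A$ and $N_B$, are independent Poisson variables with means $np$ and $n(1-p)$, so the electorate size $N = N_A + N_B$ is Poisson($n$). Since every voter and every delegator carries weight one, the total weight is $N+m$; writing $M_A$ for the number of delegators who delegate to an \emph{A}-voter, alternative \emph{A} wins whenever $W_A - W_B = (N_A - N_B) + (2M_A - m) > 0$. The key observation is that, conditionally on the voters, each delegator independently lands on an \emph{A}-voter with probability $q := N_A/N$, so that $M_A \sim \mathrm{Bin}(m,q)$ and the conditional mean of the margin is $(N+m)(2q-1)$. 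In particular, once the realized fraction of \emph{A}-voters exceeds $1/2$, the delegated votes also favour \emph{A} --- the very feature that fails under \emph{DW} and that makes delegation harmless here.

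First I would isolate the good event $G = \{q \ge p - \epsilon\}$, choosing $\epsilon = (2p-1)/4$ so that $p - \epsilon > 1/2$. Since $\{N_A \ge (p-\epsilon)n\} \cap \{N_B \le (1-p+\epsilon)n\} \subseteq G$ and each of the two complementary tail events has probability at most $e^{-cn}$ for some $c>0$ by the Poisson Chernoff bound, we get $P(G^c) \to 0$ exponentially in $n$. Next I would bound the conditional loss probability on $G$. Put $\beta = 2(p-\epsilon)-1 > 0$; on $G$ the conditional mean margin is at least $\beta(N+m)$, so \emph{A} fails to win only if $M_A$ falls below its mean by at least $\beta(N+m)/2$. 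Hoeffding's inequality for the bounded i.i.d.\ summands of $M_A$ gives a conditional loss probability at most $\exp(-\beta^2 (N+m)^2/(2m))$, and the elementary inequality $(N+m)^2/m \ge 2N$ collapses this to the $m$-free bound $\exp(-\beta^2 N)$. This uniformity in $m$ is exactly the step that allows $m$ to grow with, or even dominate, $n$.

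Finally I would assemble $P(\emph{A loses}) \le P(G^c) + \mathbb{E}[e^{-\beta^2 N}]$, bounding the conditional loss by $1$ off $G$ and by $e^{-\beta^2 N}$ on $G$, where $\mathbb{E}[e^{-\beta^2 N}] = \exp(-n(1 - e^{-\beta^2}))$ by the Poisson moment generating function; both terms vanish as $n \to \infty$. The hard part is precisely the unbounded $m$: a large delegator pool could in principle amplify an unlucky voter realization with $q < 1/2$ and overturn \emph{A}'s lead, just as weight concentration hurts \emph{A} under \emph{DW}. This is defused on two fronts --- such realizations have exponentially small probability in $n$, and on the typical event the delegators themselves tilt toward \emph{A} --- with the identity $(N+m)^2/m \ge 2N$ making the concentration bound uniform in $m$.
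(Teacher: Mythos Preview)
Your argument is correct and follows the same broad outline as the paper---model the electorate by independent Poisson counts $N_A\sim\mathrm{Pois}(np)$, $N_B\sim\mathrm{Pois}(n(1-p))$, show these concentrate so that the realized \emph{A}-fraction exceeds $1/2$ with high probability, and then control the delegators' contribution by a Hoeffding bound conditional on the electorate.

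The execution differs in a useful way. The paper handles the delegators by a case split on the size of $m$: when $m<2\epsilon n$ the margin $K-L\ge 2\epsilon n$ absorbs all $m$ delegated votes even in the worst case, and when $m\ge 2\epsilon n$ one lets $m\to\infty$ and applies Hoeffding to $X=\sum_i X_i$ with deviation $\epsilon m$. You avoid this split entirely: the single Hoeffding bound $\exp\!\bigl(-\beta^2(N+m)^2/(2m)\bigr)$ together with the elementary inequality $(N+m)^2/m\ge 4N\ge 2N$ gives an $m$-free bound $\exp(-\beta^2 N)$, after which the Poisson moment generating function finishes the job. This is cleaner, makes the uniformity in $m$ completely explicit, and also integrates over the random electorate size $N$ rather than working only on a high-probability event for the voters. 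One cosmetic point: your good event $G=\{q\ge p-\epsilon\}$ presupposes $N>0$; this is harmless since $P(N=0)=e^{-n}$, but you may want to say so.
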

\begin{proof}
Let $p > 0.5$ and $\epsilon \in (0, p - \frac{1}{2})$. Define two Poisson random variables: $K$ with parameter $np$, and $L$ with parameter $n(1-p)$. These variables reflect the number of {\em A}-voters and {\em B}-voters, respectively. All results in this proof are asymptotic; to avoid repetition, we omit the phrases "as $n \to \infty$" or "as $m \to \infty$" when it is clear from the context.

The probability that number of {\em A}-voters is at least $n(\frac{1}{2}+\epsilon)$ can be bounded using the Poisson random variable concentration inequalities from Mitzenmacher and Upfal~\cite{Mitzenmacher} which say that for a Poisson random variable $X$ with parameter $\lambda$,

If $x < \lambda$, the following holds:
\begin{equation}
\label{low_concentration}
P(X\leq x)\leq \frac{e^{-\lambda}(e\lambda)^x}{x^x},
\end{equation}
and if  $x > \lambda$, then
\begin{equation}
\label{high_concentration}
P(X\geq x)\leq \frac{e^{-\lambda}(e\lambda)^x}{x^x}.
\end{equation}
For $x=n(\frac{1}{2}+\epsilon)$ and $\lambda_K = np$, we obtain

\begin{equation*}
    P[K<n(\frac{1}{2}+\epsilon)] \leq \frac{(enp)^{n(\frac{1}{2}+\epsilon)}}{e^{np} (n(\frac{1}{2}+\epsilon))^{n(\frac{1}{2}+\epsilon)}} = \left( \frac{\left(\frac{p}{\frac{1}{2}+\epsilon}\right)^{\frac{1}{2}+\epsilon}}{e^{p-\frac{1}{2}-\epsilon}}\right)^n \overset{n \to \infty}{\longrightarrow} 0.
\end{equation*}

The last implication holds if we show that $$\left(\frac{p}{q}\right)^q<e^{p-q},$$
where $q:=\frac{1}{2}+\epsilon$. Consider $f(p)=e^{p-q}-(p/q)^q$. Then, $f(q)=0$ and $f$ is increasing in $p$. The latter holds because $\frac{\partial f(p)}{\partial p}=e^{p-q}-(p/q)^{q-1}>0$ for any $p>q$, since $e^{p-q}>1$ and $(p/q)^{q-1}<1$. 

Hence, the probability that number of {\em A}-voters is at least $n(\frac{1}{2}+\epsilon)$ is the following:
\begin{equation*}
    P[K\geq n(\frac{1}{2}+\epsilon)] = 1-P[K<n(\frac{1}{2}+\epsilon)] \overset{n \to \infty}{\longrightarrow} 1.
\end{equation*}

At the same time, for $x=n(\frac{1}{2}-\epsilon)$ and $\lambda_L = n(1-p)$ we obtain
\begin{equation}\label{eq: l>n(0.5-eps)}
    P[L>n(\frac{1}{2}-\epsilon)] \leq \frac{(en(1-p))^{n(\frac{1}{2}-\epsilon)}}{e^{n(1-p)} (n(\frac{1}{2}-\epsilon))^{n(\frac{1}{2}-\epsilon)}} \overset{n \to \infty}{\longrightarrow} 0.
\end{equation}

The last implication follows from the following: We can rewrite the fraction on the right-hand side as follows:
\begin{align*}
     &\frac{e^{n(\frac{1}{2}-\epsilon)(1+\log(n(1-p)))}}{e^{n(\frac{1}{2}-\epsilon) (\frac{(1-p)}{(\frac{1}{2}-\epsilon)} + \log(n(\frac{1}{2}-\epsilon)))}}  \\ & = \exp\big( n(\frac{1}{2}-\epsilon) \underbrace{\big[ 1+\log(n(1-p)) - \frac{1-p}{\frac{1}{2}-\epsilon} - \log(n(\frac{1}{2}-\epsilon)) \big]}_{=:g} \big).
\end{align*}
Note that $g$ is independent of $n$, as we can write
\begin{equation*}
    g = 1 - \frac{1-p}{\frac{1}{2}-\epsilon} + \log(\frac{1-p}{\frac{1}{2}-\epsilon}).
\end{equation*}
By assumption on $p$ and $\epsilon$, we have that 
\begin{equation*}
     \frac{1-p}{\frac{1}{2}-\epsilon} < 1.
\end{equation*}
We introduce the function  $f(y)$, defined for any $y \in (0,1)$ as follows:
\begin{equation*}
    f(y):= 1 -y+ \log(y).
\end{equation*}
Then $f(\frac{1-p}{\frac{1}{2}-\epsilon}) = g$.
Function $f$ has the following properties: First, $\lim_{y \to 0} f(y) = -\infty$ and $\lim_{y\to 1} f(y) = 0$. Second, the derivative $f'(y) = -1 + \frac{1}{y} > 0$, since $y<1$. Hence, $f(y)$ is negative for any $y<1$. This implies that $g<0$ and hence the right-hand side of equation \eqref{eq: l>n(0.5-eps)} is $\exp(n(\frac{1}{2}-\epsilon)g)$ and converges to $0$ for $n \to \infty$, since $g$ is negative.

Hence, the probability that number of {\em B}-voters is at most $n(\frac{1}{2}-\epsilon)$ is the following: 
\begin{equation*}
    P[L\leq n(\frac{1}{2}-\epsilon)] = 1-P[L > n(\frac{1}{2}-\epsilon)] \overset{n \to \infty}{\longrightarrow} 1.
\end{equation*}

So far, we have shown that the number of {\em A}-voters exceeds $n\left(\frac{1}{2} + \epsilon\right)$ and the number of {\em B}-voters is less than $n\left(\frac{1}{2} - \epsilon\right)$ with high probability. Now, let us focus on $m$ delegators who have independently delegated their votes to random voters. Let $E_K$ and $E_L$ denote the number of votes that {\em A}-voters and {\em B}-voters receive from delegators, respectively. It is evident that if $m < 2\epsilon n$, even if all delegators delegate their vote to {\em B}-voters, {\em A} still wins with high probability since $K - L \geq 2\epsilon n$ with high probability.

Next, we consider the case where $m > 2\epsilon n$. From the above, we know that for sufficiently large $n$, with probability 1, $K \geq n\left(\frac{1}{2} + \epsilon\right)$ and $L \leq n\left(\frac{1}{2} - \epsilon\right)$. Therefore, with high probability,
\begin{align*}
    \frac{K}{L} \geq \frac{\frac{1}{2}+\epsilon}{\frac{1}{2}-\epsilon} > 1.
\end{align*}
Now, let us define the following i.i.d. random variables for each delegator $i \in \{1, \ldots, m\}$:
\begin{equation*}
    X_i : = \begin{cases}
+1 & \text{w.p. } \frac{K}{K+L} \\
-1 & \text{w.p. } \frac{L}{K+L}
\end{cases}
\end{equation*}
$X_i = 1$ if delegator $i$ has delegated their vote to an {\em A}-voter, and $X_i = -1$ otherwise. Define $X = \sum_{i=1}^m X_i$. Here, $X$ represents the surplus of delegated votes to {\em A}-voters; in other words, $X = E_K - E_L$. We know that $\mathbb{E}[X_i] = \frac{K - L}{K + L} \geq 2\epsilon > 0$, and due to the linearity of expectation, $\mathbb{E}[X] \geq 2\epsilon m$.

Hoeffding's inequality states that for independent random variables $X_i$, where each $X_i \in [-1, 1]$, and $X = \sum_{i=1}^m X_i$, $$\Pr[|X - \mathbb{E}[X]| \geq \delta] \leq 2 \exp \left( \frac{-\delta^2}{2m} \right).$$ Using Hoeffding's inequality with $\delta = \epsilon m$,
\begin{equation}
    \begin{aligned}
        P[|X-E|X|| \geq \epsilon m] \leq 2\exp{(\frac{-m\epsilon^2}{2})} \overset{m \to \infty}{\longrightarrow} 0.
    \end{aligned}
\end{equation}

Therefore, $\Pr[X < 0] \overset{m \to \infty}{\longrightarrow} 0$. This implies that, with high probability, $X \geq 0$. The number of votes for voter {\em A} is the sum of $K$ (the original votes from {\em A}-voters) and $E_K$ (the votes obtained from delegators who have delegated their vote to an {\em A}-voter). Similarly, {\em B} has a total of $L + E_L$ votes. The vote surplus for {\em A} is $K + E_K - L - E_L$. We have demonstrated that $X = E_K - E_L \geq 0$, and $K - L \geq 2\epsilon n$ with high probability. Therefore, with high probability, {\em A} receives more votes than {\em B} and wins the election as $n \to \infty$ and $m \to \infty$.
\end{proof}
This result can be generalized to other distributions $F$, by suitably defining the corresponding value of $n$. \\
Theorems \ref{thm1}, \ref{thm_balance}, and Proposition \ref{prop: lim_n_to_inf} lead to the following claim. This claim implies that even a single delegation can substantially alter the winning probability for the ex-ante majority and influence the election outcome.
\begin{claim}
Even a single delegator can reduce the probability of {\em A} winning up to $1-p$.
\end{claim}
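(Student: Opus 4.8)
The plan is to read ``up to $1-p$'' as an achievability statement: for every $\epsilon>0$ I will exhibit a single-delegator instance ($m=1$) in which the probability that \emph{A} wins falls by at least $(1-p)-\epsilon$ relative to conventional voting. The right place to search is the \emph{EW} distribution, both because Theorem~\ref{thm_balance} guarantees that under \emph{EW} delegation can only decrease $P(A)$, and because an \emph{EW} electorate is where the conventional value $P_{EW}^c(A)$ can be driven arbitrarily close to its ceiling $1$.

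First I would take a large \emph{EW} electorate of $n$ unit-weight voters. Chaining Proposition~\ref{prop: lim_n_to_inf}, which yields $P_{EW}^d(A\mid\mathbf{1}_m)\to 1$, with Theorem~\ref{thm_balance}, which gives $P_{EW}^c(A)\ge P_{EW}^d(A\mid\mathbf{1}_m)$, and the trivial bound $P_{EW}^c(A)\le1$, I conclude $P_{EW}^c(A)\to1$ as $n\to\infty$. Hence for any prescribed $\epsilon>0$ I can fix $n$ with $P_{EW}^c(A)>1-\epsilon$.

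Next I add a single delegator whose weight $d$ exceeds the total weight of the voters, e.g.\ $d=n$. After delegation the recipient has weight $n+1$ out of a new total of $2n$, so $n+1>\tfrac{2n}{2}$ and the recipient is a dominant voter in the sense of the \emph{DW} definition. A dominant voter alone decides the outcome with no possibility of a tie, so $P(A\mid\omega)=p$ for every attainable $\omega\in\iW^d$; this is exactly the dominant-voter value $P_{DW}^c(A)=p$ from Section~\ref{sec: results1}. By the symmetry of the \emph{EW} electorate this value does not depend on which voter is chosen, so $P_{EW}^d(A\mid D)=p$. The reduction is therefore $P_{EW}^c(A)-P_{EW}^d(A\mid D)>(1-\epsilon)-p=(1-p)-\epsilon$, which tends to $1-p$ as $\epsilon\to0$.

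The step I expect to demand the most care is the limit $P_{EW}^c(A)\to1$: I would either obtain it from Proposition~\ref{prop: lim_n_to_inf} and Theorem~\ref{thm_balance} as above, or establish it directly via a Condorcet-type law-of-large-numbers estimate on the \emph{EW} tally. If one additionally wants ``up to'' to assert that $1-p$ is the \emph{largest} possible reduction, one must further check that a single delegation can never push $P(A)$ below $p$, i.e.\ that $P(A\mid\omega)\ge p$ for every weight vector $\omega$; this reduces to showing that the dominant voter is the worst case for \emph{A} and follows from the self-duality of weighted majority under the $\tfrac12$ tie-break together with $p>\tfrac12$.
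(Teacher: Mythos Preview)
Your proposal is correct and follows essentially the same route as the paper's proof: start from a large \emph{EW} electorate so that $P_{EW}^c(A)$ is close to $1$, let a single sufficiently heavy delegator create a dominant voter, and observe that the post-delegation probability is exactly $p$, giving a drop approaching $1-p$. The only cosmetic difference is that the paper cites Proposition~\ref{prop: lim_n_to_inf} directly for $P_{EW}^c(A)\to1$, whereas you obtain this more carefully by sandwiching $P_{EW}^c(A)$ between $P_{EW}^d(A\mid\mathbf{1}_m)$ (via Theorem~\ref{thm_balance}) and $1$; both are fine, and your extra remark on the upper-bound reading of ``up to'' goes slightly beyond what the paper proves.
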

\begin{proof}
    Theorems \ref{thm1} and \ref{thm_balance} imply that the maximum probability of {\em A} winning is achieved when all voters have equal weight, and the minimum probability occurs when a dominant voter exists. Note that even a single delegation can shift the weight distribution of voters from an {\em EW} configuration to a {\em DW} configuration. This situation arises when the delegator's weight surpasses the combined weight of all other agents. In such a case, post-delegation, the voter who receives the delegator's votes becomes the dominant voter, solely determining the election's outcome. As \(n\) approaches infinity, proposition \ref{prop: lim_n_to_inf} shows that the probability of {\em A} winning in conventional voting approaches 1. However, post-delegation, this probability is reduced to $p$ due to the presence of a dominant voter. Consequently, this probability change approaches $1-p$ from below.
\end{proof}

\section{Numerical Experiments}\label{sec:numerical}
We conducted a series of 400 experiments to examine the relationship between the probability of \(A\) winning, given the weights of voters, and various measures of imbalance in the distribution of those weights. The voter weight vector \(\omega\) was generated uniformly at random, with each weight ranging from 0 to 1. The number of voters was fixed at 10, i.e., \(n = 10\). In each experiment, we computed the probability of \(A\) winning, denoted as \(P(A \mid \omega)\), when \(p = 0.7\), using equation \eqref{pr_conv}.

To measure the imbalance in each vector \(\omega\), we computed four distinct statistical measures: the Gini coefficient, Variance, the Theil index, and the Hoover index. We then computed the Pearson correlation coefficients between the computed probabilities that {\em A} wins and each of the imbalance measures. The correlations, precise definitions, and mathematical formulas for each measure are detailed below, where $\mu$ denotes the mean of \(\omega\):

\begin{table}[H]
    \centering
    
    \caption{Definitions and Correlations of Inequality Measures with Probability of A Winning.}
    \label{tab:measures_and_correlations}
 \begin{tabular}{|p{2cm}|p{2cm}|p{3.3cm}|p{2.2cm}|}
        \hline
        \textbf{Measure} & \textbf{Definition} & \textbf{Formula} & \textbf{Correlation} \\
        \hline
        \textbf{Gini} & Quantifies inequality of distribution. & 
        $\frac{n + 1 - 2 \sum_{i=1}^{n} \frac{cum(\omega)_i}{cum(\omega)_n}}{n}$ & -0.96 \\
        \hline
        \textbf{Variance} & Measures dispersion of weights. & 
        $\frac{1}{n} \sum_{i=1}^{n} (\omega_i - \mu)^2$ & -0.61 \\
        \hline
        \textbf{Theil} & Evaluates entropy and inequality. & 
        $\frac{1}{n} \sum_{i=1}^{n} \omega_i \log \left( \frac{\omega_i}{\mu} \right)$ & -0.80 \\
        \hline
        \textbf{Hoover} & Portion of weight to redistribute for equality & 
        $\frac{1}{2} \sum_{i=1}^{n} \frac{|\omega_i - \mu|}{\sum_{j=1}^{n} \omega_j}$ & -0.94 \\
        \hline
    \end{tabular}
\end{table}

The correlation matrix from our experiments provide insights into how strongly each measure of imbalance is related to the probability of \(A\) winning. We observe that the balance in the weight distribution of voters highly affects the probability of {\em A} winning. As an example, let us consider the Gini coefficient leading to a correlation of -0.96 with the probability of {\em A} winning. This strong negative correlation indicates that as the voter weight distribution becomes more balanced (lower Gini coefficient), the probability of {\em A} winning increases. These numerical findings extend the findings from Theorems \ref{thm1} and \ref{thm_balance}, demonstrating that delegation favors {\em A} when it promotes a more balanced distribution of voter weights.

The results are visualized through scatter plots in Figures~\ref{fig:gini}--\ref{fig:hoover} showing the relationship between each imbalance measure (on the x-axis) and the computed probability that {\em A} wins (on the y-axis). 


\begin{figure}[htbp]
  \centering
  \begin{minipage}[b]{0.48\textwidth}
    \centering
    \includegraphics[width=\textwidth]{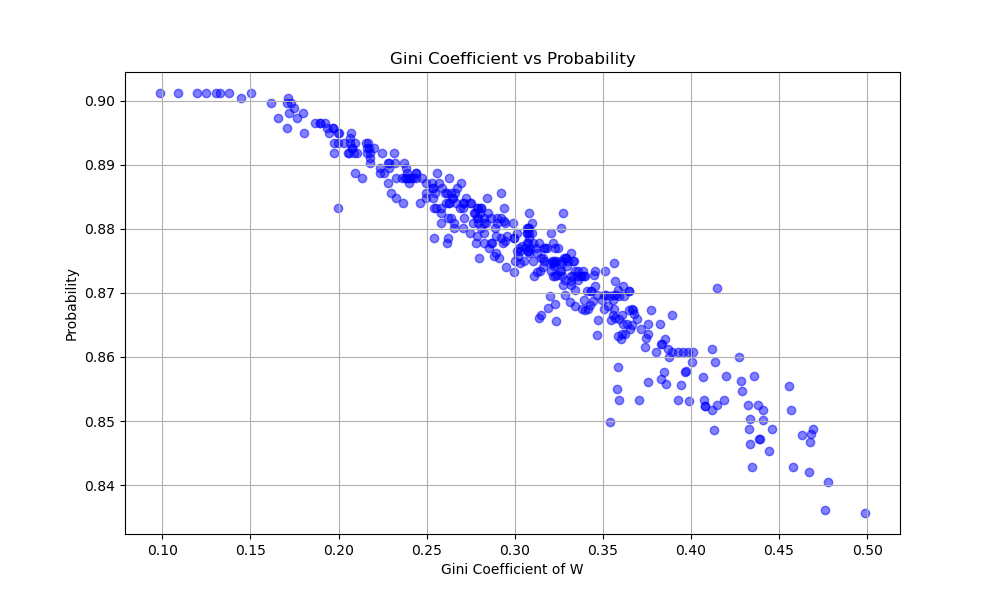}
    \caption{Gini Coefficient vs Probability}
    \label{fig:gini}
  \end{minipage}
  \hfill
  \begin{minipage}[b]{0.48\textwidth}
    \centering
    \includegraphics[width=\textwidth]{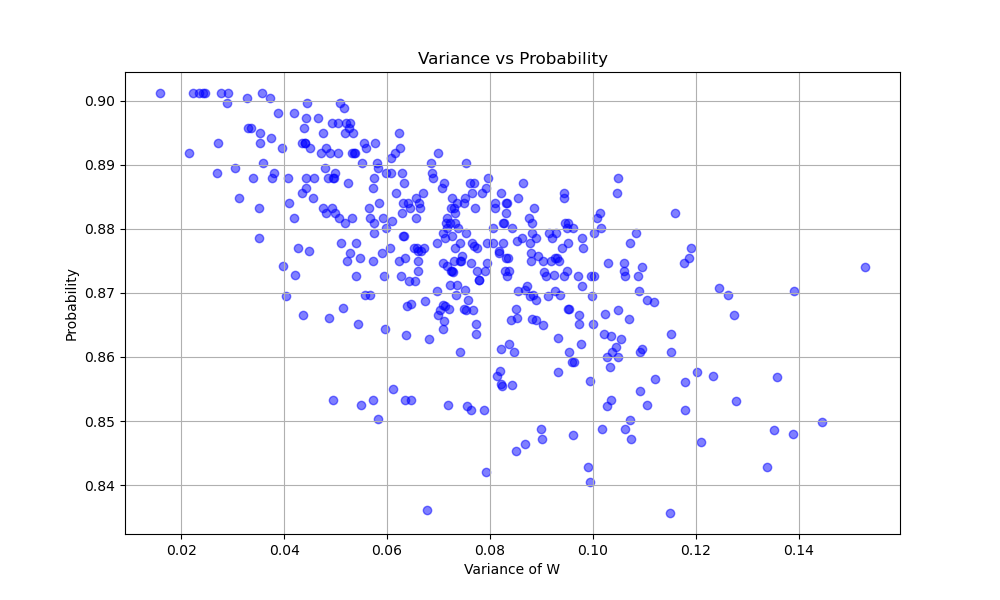}
    \caption{Variance vs Probability}
    \label{fig:variance}
  \end{minipage}
\end{figure}


\begin{figure}[htbp]
  \centering
  \begin{minipage}[b]{0.48\textwidth}
    \centering
    \includegraphics[width=\textwidth]{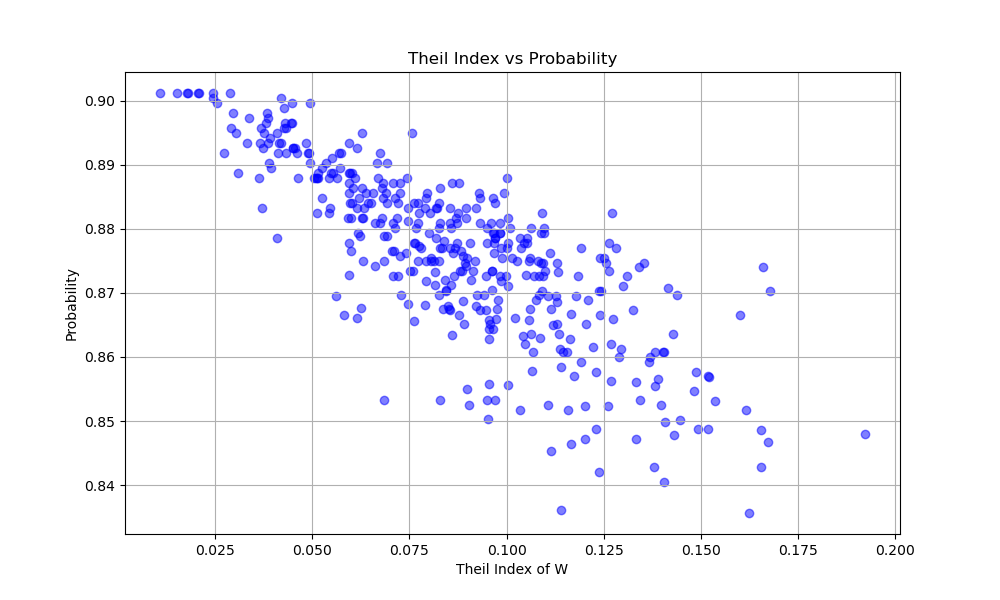}
    \caption{Theil Index vs Probability}
    \label{fig:theil}
  \end{minipage}
  \hfill
  \begin{minipage}[b]{0.48\textwidth}
    \centering
    \includegraphics[width=\textwidth]{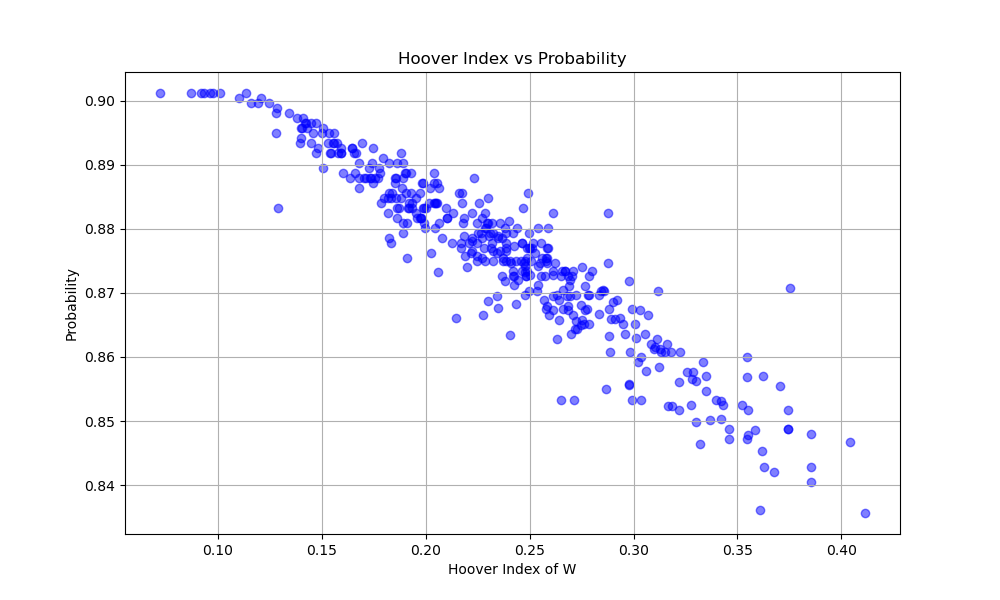}
    \caption{Hoover Index vs Probability}
    \label{fig:hoover}
  \end{minipage}
\end{figure}

\section{Sybil Attacks}\label{sybill-attack}

So far, we have assumed that the identities of agents are fixed and known. However, in blockchains without robust identity issuance mechanisms, it is possible for agents to mimic multiple identities. For instance, they could distribute their tokens across several unconnected wallets. While such actions may involve a cost, this cost can be negligible for agents with substantial stakes. Consequently, such agents may have an incentive to create multiple identities to increase their chances of obtaining votes through delegation. 

There are several ways to prevent Sybil attacks. For instance, one can take a snapshot of all stakes before announcing the vote. This means only wallets that existed at the time of the snapshot—and their balances—are valid for voting. Any newly created Sybil accounts after the announcement won’t have an impact. This also helps reduce vote-buying since people can’t just buy more tokens after the announcement to gain more voting power.
This approach is already used in the blockchain space by platforms like Snapshot. Another method, explored in Project Catalyst (in the Cardano ecosystem), requires wallets to register for voting in advance. The exact voting rules and tallying process are only revealed after registration closes. This uncertainty would make it less appealing for attackers to register Sybil wallets.

Moreover, elections could also require participants to hold their stake for a certain period before being eligible to vote. This way, newly created Sybil accounts wouldn’t qualify since they wouldn’t have held their stake long enough. Finally, we note that Sybil attacks can be mitigated on blockchains with identity issuance mechanisms, where all wallets are linked to unique identities, and voting is restricted to those identities. This highlights a potential rationale for implementing identity issuance procedures in systems with powerful stakeholders, as such mechanisms prevent the dilution of the benefits of delegation caused by Sybil attacks.

Nevertheless, one might be interested in the consequences if such Sybil attacks are possible and happen, and we explore the implications of this phenomenon. One agent could create an unlimited number of accounts by allocating an infinitesimally small stake to each. Since all accounts have an equal probability of receiving delegated votes regardless of stake size, the attacker would, with high probability, capture all delegated votes. Consequently, the final decision would be entirely in their hands. In this scenario, regardless of the initial weight distribution, if the number of delegated votes is sufficiently high, the system effectively reduces to a \textit{DW} scenario, which offers the least benefit to the majority. Therefore, if Sybil attacks cannot be prevented, it is better to prohibit vote delegation in the system.

\section{Conclusions and Open Questions}\label{sec: conclusion}
Vote delegation is now a widespread practice in DeFi, with major protocols using it to increase participation and operational efficiency in decentralized governance.
We have demonstrated that the allowance for vote delegation can significantly alter election outcomes on decentralized systems, in particular blockchains. Crucially, the alternative that benefits from delegation heavily depends on the distribution of voting weights of voters. Our research reveals a striking dichotomy: if voting weights are evenly distributed ({\em EW}), delegation tends to benefit the ex-ante minority. Conversely, if a single voter holds a majority of the voting weights ({\em DW}), it is the ex-ante majority that benefits from delegation. This observation is rooted in the fact that the highest probability of the ex-ante majority winning is obtained when all voters have equal weight, while the lowest probability occurs when a dominant voter exists. Our numerical experiments corroborate this finding and show that this argument can be extended to arbitrary weight distributions. Specifically, if delegation leads to a more balanced distribution of voters' weight, it typically benefits the ex-ante majority. An interesting open question for future research is whether a measure of balance for the weight distribution of voters exists that has a monotonic relation with the probability of ex-ante majority winning.\\

In blockchains, it is typically assumed that the preferred alternative has a higher chance of winning the election. Consequently, the preferred alternative is considered the ex-ante majority, while the non-preferred alternative is regarded as the ex-ante minority. Under this assumption, our findings lead to four main conclusions for governing blockchains:
\begin{enumerate}
    \item For small blockchain communities (DeFi protocols or DAOs) with balanced voting weights, vote delegation is undesirable. Even a single delegator can have a devastating effect, reducing the probability of the preferred alternative winning the election up to $0.5$. Therefore, vote delegation, even with a single delegator, should not be allowed. 
    \item In large communities where all agents have equal weight, vote delegation has no impact as the preferred alternative wins with high probability in either case. In this situation, the decision to allow delegation should be based on other considerations, such as financial incentives.
    \item In blockchains with a highly unbalanced weight distribution of voters, vote delegation helps to balance the weights, enhances the probability of the preferred alternative winning, and thus benefits the entire community.
    \item When a large number of same-weight delegators exist, allowing delegation will shift the weight distribution of voters towards equal weights, making delegation beneficial in this situation.
\end{enumerate}

While our investigation primarily focuses on the impact of vote delegation on the probability of each alternative winning an election, several important aspects remain uncovered. One key area is the strategic behavior of delegators who anticipate the effects of delegation, and adjust their willingness for delegation accordingly. Additionally, examining elections with more than two alternatives presents another intriguing question for future research. Understanding how the distribution of voter weights influences outcomes in multi-alternative elections could provide a broader and more comprehensive understanding of delegation's effects. In addition, empirical validation of our predictions in live DeFi governance systems, such as in Optimism or Compound, could bridge the gap between theoretical models and actual protocol behavior.

\bibliographystyle{alpha}
\newcommand{\etalchar}[1]{$^{#1}$}


\appendix




\newpage
\noindent\textbf{Table of Notations}  \\
\begin{center}
\renewcommand{\arraystretch}{1.7}
\begin{tabular}{|l|p{6cm}|}
\hline
\textbf{Notation} & \textbf{Description} \\
\hline
\( n \) & Number of voters \\
\hline
\( m \) & Number of delegators \\
\hline
\( w=[w_j]^n_{j=1} \) & Weight vector of voters \\
\hline
\( D=[d_j]^n_{j=m} \) & Weight vector of delegators \\
\hline
\( w^d=[w^d_j]^n_{j=1} \) & Weight vector of voters post-delegation\\
\hline
\( P(A \mid \omega) \) & Probability of {\em A} winning, when voters weights are given by $\omega$\\
\hline

\( P^d(A\mid D) \) & Probability of {\em A} winning post-delegation\\
\hline

\( P_{DW}^c(A) \) & Probability of {\em A} winning in conventional voting when {\em DW} is the initial distribution of voters \\
\hline

\( P_{DW}^d(A \mid D) \) & Probability of {\em A} winning post-delegation, when {\em DW} is the initial distribution of voters\\
\hline

\( P_{EW}^c(A) \) & Probability of {\em A} winning in conventional voting, when {\em EW} is the initial distribution of voters \\
\hline
\( P_{EW}^d(A \mid D) \) & Probability of {\em A} winning post-delegation, when {\em EW} is the initial distribution of voters \\

\hline
\end{tabular}
\end{center}

\end{document}